\documentclass[final,5p,times,twocolumn]{elsarticle}
 
 %% if you use PostScript figures in your article
%% use the graphics package for simple commands
%% \usepackage{graphics}
%% or use the graphicx package for more complicated commands
% \usepackage{graphicx}

%% or use the epsfig package if you prefer to use the old commands
%% \usepackage{epsfig}

%% The amssymb package provides various useful mathematical symbols
\usepackage{amssymb}
\usepackage{amsmath}
\usepackage{comment}
\usepackage{listings}

\usepackage{mathtools,amsthm}
\newtheoremstyle{case}{}{}{}{}{}{:}{ }{}
\theoremstyle{case}

\usepackage{graphicx}
\usepackage{subfig}
\usepackage{multirow}

%\journal{Journal of Process Control}

\newtheoremstyle{mystyle}%                % Name
  {}%                                     % Space above
  {}%                                     % Space below
  {\itshape}%                             % Body font
  {}%                                     % Indent amount
  {\bfseries}%                            % Theorem head font
  {.}%                                    % Punctuation after theorem head
  { }%                                    % Space after theorem head, ' ', or \newline
  {}%                                     % Theorem head spec (can be left empty, meaning `normal')

\theoremstyle{mystyle}
\newtheorem{thm}{Theorem}
\newtheorem{lem}{Lemma}

\newdefinition{rmk}{Remark}
\newproof{pf}{Proof}
\newproof{pot}{Proof of Theorem \ref{thm2}}
\interdisplaylinepenalty=2000

\usepackage{enumitem}
\newlist{steps}{enumerate}{1}
\setlist[steps, 1]{label = Step \arabic*:}

\theoremstyle{definition}
\newtheorem{definition}{Definition}[section]
\usepackage{booktabs}
\usepackage{tikz}
\usetikzlibrary{shapes.geometric, arrows}

\begin{document}

\begin{frontmatter}

\title{A contraction theory approach to observer-based controller design for glucose regulation in type 1 diabetes with intra-patient variability } 

% \author{Bhabani Shankar Dey, Anirudh Nath\corref{cor2}, Abhilash Patel, Indra Narayan Kar}

% \address{Department of Electrical Engineering, Indian Institute of Technology Delhi, New Delhi, India}
% \cortext[cor2]{Corresponding Author: Anirudh Nath}

\author[label1]{Bhabani Shankar Dey}
\address[label1]{Department of Electrical Engineering, Indian Institute of Technology Delhi, Hauz Khas, New Delhi, 110016, India}
\address[label2]{GE Global Research, Bengaluru, Karnataka, 560066, India}
\address[label3]{Department of Electrical  Engineering, Indian Institute of Technology Kanpur, Kalyanpur, Kanpur, Uttar Pradesh, 208016, India}
\cortext[cor2]{Corresponding Author: Anirudh Nath}
\ead{bhabanishankar440@gmail.com}
\author[label2]{Anirudh Nath\corref{cor2}}
\ead{anirudh.nath88@gmail.com}
\author[label3]{Abhilash Patel}
\ead{apatel@iitk.ac.in}
\author[label1]{Indra Narayan Kar}
\ead{ink@ee.iitd.ac.in}

\begin{abstract}
While the Artificial Pancreas is effective in regulating the blood glucose in the safe range of 70-180 mg/dl in type 1 diabetic patients, the high intra-patient variability, as well as exogenous meal disturbances, poses a serious challenge. The existing control algorithms thus require additional safety algorithms and feed-forward actions. Moreover, the unavailability of insulin sensors in Artificial Pancreas makes this task more difficult. In the present work, a subcutaneous model of type 1 diabetes (T1D) is considered for observer-based controller design in the framework of contraction analysis. A variety of realistic multiple-meal scenarios for three virtual T1D patients have been investigated with $\pm30\%$ of parametric variability. The average time spent by the three T1D patients is found to be 77\%, 73\% and 76\%, respectively. A significant reduction in the time spent in hyperglycemia ($>180$ mg/dl) is achieved without any feed-forward action for meal compensation. 
\end{abstract}

\begin{keyword}
Artificial pancreas \sep observer \sep contraction analysis \sep intra-patient variability\sep nonlinear control
\end{keyword}

\end{frontmatter}

%%%%%%%%%%%%%%%%%%%%%%%%%%%%%%%%%%%%%%%%%%%%%%%%%%%%%%%%%%%%%%%%%%%%%%%%%%%%%%%%%%%%%%%%%%%%%%%%%%%%%%%%%%%%%%%%%%%%%%%%%%%%%%%%%%%%%%%%%%%%%%%%%%%%%%%%%%%%%%%%%%%%
\section{Introduction}\label{sec1}
Deficiency of insulin is prominent in type 1 diabetes patients (T1DPs) due to auto-immune destruction of the insulin-secreting $\beta$-cells in the pancreas. It results in prolonged elevated glucose levels ($>180$ mg/dl) in the blood plasma, termed as hyperglycemia \cite{cinar2018artificial}. Thus the T1DPs have to rely upon insulin therapy in terms of multiple daily insulin injections to maintain normalcy in glucose level (70-180 mg/dl). In insulin therapy, dosage of insulin is manually calibrated for which requires the carbohydrate counting apriori \cite{MAJDPOUR2019S8}. Any overestimation or underestimation of carbohydrate counting can lead to inappropriate insulin dosages resulting in hyperglycemic ($>180$ mg/dl) and hypoglycemic ($<70$ mg/dl) episodes. Hyperglycemic and hypoglycemic instances can lead to ineffective glucose management in T1DPs. A large number of health-related complications, both macrovascular and microvascular, occur in T1DPs due to hyperglycemia and hypoglycemia \cite{BJORNSTAD2018809,jdrf}.\par

The issues as mentioned above can be addressed by the Artificial Pancreas (AP). The AP is an externally worn device equipped with a continuous micro-fluid insulin delivery system (insulin pump) and a glucose sensor. The performance of the AP is largely affected by (i) the delay in absorption of the subcutaneous insulin infusion \cite{bondia2018insulin}, (ii) the circadian oscillations in insulin sensitivity \cite{insulin} and (iii) the uncertainty in meal absorption \cite{sensor}. Due to these factors, the T1DPs using AP face the problem of postprandial hyperglycemia and late hypoglycemia \cite{hypo}. The postprandial hyperglycemia refers to the sudden increase in glucose concentration after the meal intake. The late hypoglycemia refers to the sudden decrease in glucose concentration due to excessive insulin infusion at the time of the meal. So the existing APs often rely on additional feed-forward strategy and safety algorithms to achieve effective glucose regulation (70-180 mg/dl). The feedforward strategy requires accurate information about the carbohydrate contents of the meals to determine the insulin dosage required to compensate for the effect of the meal \cite{MARCHETTI2008149}. On the other hand, safety algorithms prevent excessive insulin infusion by estimating the existing insulin concentration in the body \cite{FUSHIMI20181}. Thus, the focus of the current work hovers around achieving a tight glycemic control while minimizing the time spent in postprandial hyperglycemia and avoiding late hypoglycemia. In the current work, a model-based feedback control strategy is proposed that does not require any additional feedforward or safety algorithms. \par

The functionality of an automated AP can be divided into two steps: (i) the task of state estimation and (ii) the task of control design. The existing state estimation techniques and control algorithms for the AP is discussed current and succeeding paragraphs, respectively. A summary of a wide variety of state estimation techniques has been reported in the literature \cite{bondia2018insulin}. The main limiting factors of the success of the existing works on state estimation are intra-patient variability (variation of system parameters that occur inherently in the system) and uncertainty in the meal absorption dynamics \cite{guarcost}. There are several attempts to address these issues as in \cite{guarcost,ijs,aem}. Still, this remains an open problem. The nature of the variability is unknown and, hence, poses some limitations to the current observers where specific assumptions on the uncertainty are involved. The Luenberger observer designs in \cite{hariri2011identification,NATH20197} are based on the nominal system and involve some linear approximations in the design. It can be too restrictive in terms of their applicability. A few robust nonlinear extended Luenberger observers have been designed in some recent studies \cite{guarcost,ijs,aem} for addressing intra-patient variability. But there are some restrictive assumptions on the nature of uncertainty that may be difficult to be characterized in practice. On the other hand, the stochastic filtering approaches motivated by Kalman filtering (KF) techniques and its variants are quite common \cite{7413702}. Algorithms like KFs \cite{kf}, extended KFs \cite{ekf} and unscented KFs \cite{eberle2011unscented,ukfmeal} have been proposed for state estimation in the AP problem.  The requirement of precise information about the model, complex matrix computations, and the difficulty in the characterization of the probability distribution functions one to rethink before implementing them both in theory and practice while keeping in mind the unknown nature of parametric variability in T1DPs. An observer of extended Luenberger structure is proposed in the present work.  The issues related to intra-patient variability and uncertainty in meal disturbance are addressed in the framework of contraction analysis \cite{9067132}. \par 
The existing control algorithms for the AP is discussed from time to time in terms of several survey papers \cite{NATH2018289,blauw2016review,review}. The main objectives of the current work are (i) to minimize the instances of postprandial hyperglycemia, and (ii) to avoid late hypoglycemic instances, in the presence of intra-patient variability.  Mainly, two variants of robust controllers, namely the $H_{\infty}$ control \cite{RuizVelzquez2004BloodGC,mandal2014lmi,6851161} and the sliding mode control (SMC) \cite{AHMAD2017200,GALLARDOHERNANDEZ2013747} exist in the literature. The necessity of structural characterization of the bounded uncertainty and computational complexity makes the practical implementation of the $H_{\infty}$ filter based controllers challenging. The chattering issues that are inevitable in SMC  may trigger hypoglycemia during large intra-patient variability due to excessive insulin infusion \cite{7902222}. Apart from these, most of the control algorithms are augmented with additional safety layers \cite{blauw2016review,6463437,8025565,7870660} and meal compensation techniques (feed-forward action) \cite{MARCHETTI2008149}. It complicates the practical applications of such a multi-layered control approach \cite{8263478}. Thus, an attempt is made in this current work to devise a simple feedback control law based on the theory of contraction analysis, which would be sufficient in itself to tackle these issues. \par

Contraction analysis offers an alternative tool for stability analysis of nonlinear systems, which has got lots of attraction recently \cite{contraction}. The inherent feature of forgetting initial conditions exponentially makes it intriguing among control engineers to apply its principle in this problem \cite{LOHMILLER1998683}. Contraction theory is used in applications like observer design in \cite{sharma2008design, rayguru2019high}. As compared to the stability analysis of uncertain nonlinear systems in the framework of Lyapunov stability, contraction analysis offers a more relaxed framework for the same. For instance, high uncertainty and time variability in the physiological parameters like insulin sensitivity, parameters related to insulin absorption and meal absorption, etc. that exist in T1DPs. As a result, there exists a complicated shifting of the equilibrium concerning the variability in system parameters. This may complicate the Lyapunov stability analysis, and getting a closed-form expression of equilibria in respect of parameters may not be possible \cite{AYLWARD20082163}. This problem can be circumvented by using contraction analysis. \par

In the present work, a nonlinear observer-based design of the control algorithm is proposed for the APS that considers the IVP model in \cite{ivp}. The main highlights of the proposed state estimation based control technique is provided as follows:
\begin{enumerate}
\item[(i)] A nonlinear observer is designed for estimating the glucose and insulin concentrations based on the contraction theory approach.The unknown observer gains are computed by solving a set of linear inequalities.\vspace{-0.2cm}
\item[(ii)] The information about the estimated state variables are exploited to design a feedback control law for the glucose regulation under variability in insulin sensitivity and uncertainty in insulin absorption in the subcutaneous compartment.\vspace{-0.2cm}
\item[(iii)] In comparison to the works in \cite{TURKSOY2017159,SALAMIRA201968}, the proposed control algorithm involves the estimation of insulin concentration in the presence of uncertainty in parameters. Hence it is not required to augment any additional safety layers \cite{6463437} like insulin-on-board (IOB) \cite{FUSHIMI20181} to avoid post-prandial hyperglycemia \cite{TURKSOY2017159}. It also avoids any type of additional feed-forward compensating action for disturbance rejection \cite{MARCHETTI2008149}. \vspace{-0.2cm} 
\item[(iv)] The proposed strategy allows us to design the observer and the controller separately. The overall closed loop system, including controller and observer together, is shown to be stable using the concept of partial contraction theory. \vspace{-0.2cm}
\end{enumerate}

The remaining part of this work is presented in several sections as follows. Section 2 presents the methodology of the current work. It further consists of four subsections that present the mathematical model, theoretical background, observer design, and controller design. Section 3 contains the closed-loop studies' results to show the effectiveness of the theoretical results. In Section 4, the summary of the work is briefly provided.

\section{Methods}
In this section, the dynamical model of the glucoregulatory system of T1DP is presented in the first subsection. In the second subsection, a brief background on contraction analysis is included. The third and fourth subsections, the details about the observer and controller design methodologies are provided. 

\subsection{Mathematical model}
\label{sec2}
A model plays an important role in the effectiveness of the closed-loop control strategy. A large number of mathematical models are available in the existing literature that models the dynamics of the glucose-insulin interactions in T1DPs with varying levels of abstractions and utility, as mentioned in \cite{lio2019automated}. The complexly of these models, in terms of a large number of state variables and complicated nonlinear functions, limit their relevance to control applications \cite{RESALAT20177756}. Recently, a few variants of control-oriented subcutaneous models of T1DPs have been reported in the literature \cite{NATH202094,model,ricardo,ivp}. The model in \cite{ivp} is adopted here due to its structural simplicity and identifiability. It is very important, especially in the context of personalising a control technique for individual patients. \par

\subsubsection{State space representation}
The Medtronic Virtual Patient (MVP) \cite{model} models the glucose-insulin dynamics of type 1 diabetes in terms of a coupled ordinary differential equations (ODE) as
\begin{equation}\label{sys}
\begin{array}{lll}
\dot{x}_1&=&-p_1x_1-x_1x_2+EGP+ R_{a}(t)\\ 
\dot{x}_2&=&-p_2x_2+p_3x_3\\
\dot{x}_3&=&-p_4x_3+p_4x_4\\
\dot{x}_4&=&-p_5x_4+p_6u(t)
\end{array}
\end{equation}
where the state vector $x=[x_1~x_2~x_3~x_4]^T$ represents the blood glucose concentration (mg/dl), the effective insulin in the blood ($min^{-1}$), the plasma insulin concentration (mU/l) and the subcutaneous insulin concentration (mU/l), respectively. The physiological parameter, $p_1$ denotes the glucose effectiveness, $\frac{p_3}{p_2}$ represents the insulin sensitivity factor, $p_4$ represents the time constant of the plasma insulin, $p_5$ denotes the time constant of the subcutaneous insulin and $EGP$ stands for the endogenous glucose production. The values of the parameters are adopted from \cite{model} as provided in Table \ref{tab1}. The the blood glucose concentration is the output, $y$ of the system \eqref{sys}, as mentioned below.
\begin{equation}\label{op}
    y=Cx=x_1
\end{equation}
where $C=[1~0~0~0]$ is the output matrix.

\begin{table*}[h!]
\centering
\caption{Estimated parameters for different subjects \cite{ivp}.}
\label{tab1}
 \begin{tabular}{|c| c| c| c| c| c|c|c|} 
 \hline
 Subjects & $p_1$ & $p_2$ & $p_3$ & $p_4$ & $p_5$ & $EGP$ & $p_6$ \\ [0.5ex]  \hline
1 & $2.20\times 10^{-3}$ & $1.06\times 10^{-2}$ & $8.60\times 10^{-6}$ & 0.0213 & 0.0204 & 1.33 & $1.02\times 10^{-5}$\\
% 2 & $4.38\times 10^{-3}$ & $1.16\times 10^{-2}$ & $1.12\times 10^{-6}$ & 0.1 & 0.0244 & 0.6 & $1.904\times 10^{-5}$\\
3 & $3.50\times 10^{-3}$ & $2.33\times 10^{-2}$ & $1.079\times 10^{-5}$ & 0.0143 & 0.0141 & 1.07 & $1.55\times 10^{-5}$\\
% 4 & $1.64\times 10^{-5}$ & $8.14\times 10^{-3}$ & $3.069\times 10^{-6}$ & 0.0143 & 0.011 & 0.98 & $6.06\times 10^{-6}$\\
5 & $4.33\times 10^{-3}$ & $9.63\times 10^{-3}$ & $1.974\times 10^{-6}$ & 0.0217 & 0.0217 & 0.6 & $1.416\times 10^{-5}$\\
\hline
 \end{tabular}
%\end{center}
\end{table*}

The equilibrium state of the system in \eqref{sys} is given as, $[x_1~x_2~x_3~x_4]^T=[EGP/p_1~0~0~0]^T$. To shift this non-zero equilibrium to the origin, a translation operation is performed in the original state space. Thus, the resulting deviated state can be expressed as
\begin{equation}\label{transform}
    [x_{1d}~x_{2d}~x_{3d}~x_{4d}]^T=[x_1~x_2~x_3~x_4]^T-[EGP/p_1~0~0~0]^T
\end{equation}
where $(.)^T$ represents the transpose operator. The corresponding deviated dynamics of the system \eqref{sys} can be obtained as
\begin{eqnarray}\label{dev_sys}
\begin{array}{lll}
\dot{x}_{1d}&=&-p_1x_{1d}-\frac{EGP}{p_1}x_{2d}-x_{1d}x_{2d}+ R_{a}(t)\\ 
\dot{x}_{2d}&=&-p_2x_{2d}+p_3x_{3d}\\
\dot{x}_{3d}&=&-p_4x_{3d}+p_4x_{4d}\\
\dot{x}_{4d}&=&-p_5x_{4d}+p_6u(t).
\end{array}
\end{eqnarray}
The corresponding output equation is expressed as
\begin{equation}
    y=Cx=x_{1d}
\end{equation}

\subsubsection{Meal disturbance model}
The meal disturbance model represents the dynamics of the glucose absorption in the gut and its appearance in the blood circulation following the meal intake. As mentioned in \cite{doi:10.3109/14639239709089833}, the dynamics can be modelled as a two-compartmental model, as provided below

\begin{eqnarray}
\begin{array}{lll}
\dot{d}_{1}&=&-\frac{d_1}{t_{max}}+ Bio \times Carb(t)\\ 
\dot{d}_{2}&=&\frac{d_1}{t_{max}}-\frac{d_2}{t_{max}}\\
     R_a(t)&=&\frac{d_2}{t_{max}}
\end{array}
\end{eqnarray}
where $d_1$ and $d_2$ are the amount of glucose in the first and second compartments (mg/dl), $t_{max}$ denotes the time-to-maximum rate of appearance of glucose in the blood (min), t is the time of meal intake (min), Carb(t) denotes the ingested amount of carbohydrates at time t (mg/dl/min), Bio is carbohydrate bioavailability of the meal (unitless) and
$R_a(t)$ represents glucose absorption rate (mg/dl/min) in the gut. The values of the parameters are adopted from \cite{doi:10.3109/14639239709089833} as $t_{max}=43$ min and $Bio=71$.

\subsection{Preliminaries on Contraction Theory}
Contraction theory helps to characterise the temporal behaviour of trajectories for a dynamical system to each other \cite{contractionmainpaper}. The system is said to be contracting if two arbitrary trajectories converge towards each other, forgetting their initial conditions. Fig. \ref{cont} depicts a pictorial representation of two arbitrary trajectories converging towards each other. The use of contraction theory relaxes the prior knowledge of equilibrium and ensures inherent robustness to disturbance.
Here, we briefly present results from contraction theory, which will be used in subsequent sections.

\begin{figure}[!h]
		\centering
		\includegraphics[width=1.0\linewidth]{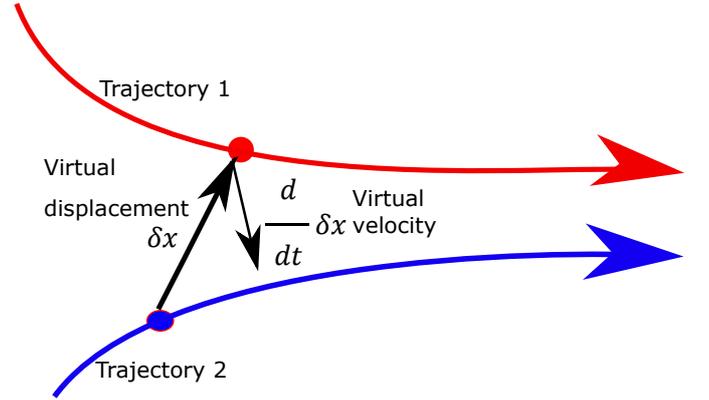}
		\caption{Two trajectories contracting to each other in a contracting region.}
		\label{cont}\vspace{-0.4cm}
\end{figure}

Consider a nonlinear dynamical system,
\begin{equation}\label{eq:nosy}
    \dot{x}=f(x),
\end{equation}
where $x\in\mathbf{R}^n$ is the system states, $f(x)$ is the drift function $f:\mathbf{R}^n\rightarrow\mathbf{R}^n$. The differential dynamics of the system (\ref{eq:nosy}) is,
\begin{equation}\label{eq:difsy}
    {\delta \dot x}=\frac{\partial f}{\partial x}\delta x=J(x)\delta x,
\end{equation}
where $\delta x$ is the virtual displacement of infinitesimal perturbation in $x$, $J(x)$ is the Jacobian matrix of $f(x)$.
The evolution of virtual displacement can be inferred using (\ref{eq:difsy}) as,
\begin{equation}
    \frac{d}{dt}(\delta x^T\delta x)=2\delta x^T\frac{\partial f}{\partial x} \delta x \leq 2 \lambda_{max}(J)\delta x^T\delta x
\end{equation}
where $\lambda_{max}(J)$ is the largest eigenvalue of the Jacobian $J$.
If $\lambda_{max}(J)$ is strictly uniformly negative, then any infinitesimal
length $|\delta x|$ converges exponentially to zero. Hence, all system trajectories of (\ref{eq:nosy}) will converge towards each other, forgetting their initial conditions, as illustrated in Fig. \ref{cont}. If $\lambda_{max}(J)$ is negative in a set, such is called an invariant set, and the region is called a contraction region. In this research article to prove this negative definiteness, we have used the contraction analysis based on matrix measure, which is alternatively known as logarithmic norm. Matrix measure, often referred to as $\mu{(J)}$ is defined as the directional derivative associated with the induced matrix norm evaluated at identity matrix in the direction of matrix $(J)$ \cite{vidyasagar1978matrix}. It should be noted that, if the vector norm is the standard Euclidean norm, then induced matrix measure is the largest eigenvalue of the symmetric part of $(J)$. Based on the different vector norms, different matrix measures can be referred to from Table \ref{table_matrix_measure}.
\begin{table}[h!]
\centering
\caption{Matrix measure of real matrix $A=[a_{ij}]$ corresponding to different vector norms \cite{sontagopenproblems}.}.
\label{table_matrix_measure}
 \begin{tabular}{|c| c|} 
 \hline
 Vector Norms $||(.)||$ & Corresponding Matrix Measure, $\mu(A)$\\ [0.5ex]  \hline
$ ||x||_{1}=\sum_{i=1}^{n}{|x_{i}|}$ & $\mu_{1}(A)=\max_{j}{(a_{jj}}+\sum_{i\neq j}{|a_{ij}|})$\\
\hline
$ ||x||_{2}=(\sum_{i=1}^{n}{|x_{i}|^{2}})^\frac{1}{2}$ & $\mu_{2}(A)=\lambda_{max}(\frac{A+A^{T}}{2})$\\
\hline
$ ||x||_{\infty}=\max_{1\leq i \leq n}{|x_{i}|}$ & $\mu_{\infty}(A)=\max_{i}{(a_{ii}}+\sum_{i\neq j}{|a_{ij}|})$\\\hline
 \end{tabular}
\end{table}
\begin{definition}
Uniform negative definiteness of Jacobian $J(x)$ means there exist a positive constant $\beta$ such that $(J(x)^T+J(x))<-\beta I<0$ in a region $\Omega\subset\mathbf{R}^n$ for $\forall t>0$.
\end{definition}

A relaxed form as partial contraction analysis extends contraction analysis for convergence to a specific behavior. It has been applied to the synchronization of oscillators~\cite{LOHMILLER1998683} and observer design~\cite{observer}.\\
\\
% \textbf{Lemma 1:}
\begin{lem}\label{lemma1}
Consider a nonlinear system of the form
$$\dot{x}=f(x,x)$$
and assume that the auxiliary system (a.k.a virtual system)
$$\dot{y}=f(y,x)$$
is contracting to $y$. If a particular solution of the auxiliary y-system verifies a smooth
the specific property, then all trajectories of the original x-system verify this property exponentially.
The original system is said to be partially contracting.
\end{lem}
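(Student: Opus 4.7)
The plan is to exploit the fact that the original trajectory $x(t)$ is itself a particular solution of the auxiliary $y$-system, and then let contraction of the auxiliary system do the work.

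First, I would observe that substituting $y=x(t)$ into the right-hand side of the auxiliary system $\dot{y}=f(y,x)$ yields $f(x,x)$, which equals $\dot{x}$ by definition of the original system. Hence $y(t)=x(t)$ is a valid trajectory of the auxiliary $y$-system for the given driving signal $x(t)$. So the $y$-system admits at least two solutions of interest: the trajectory $x(t)$ itself, and the particular solution $y^\star(t)$ which is assumed to satisfy the specific smooth property.

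Next, I would invoke the contraction hypothesis on the auxiliary system. Because $\dot{y}=f(y,x)$ is contracting in $y$ (uniformly in the driving signal $x$), any two trajectories of this system driven by the same $x(t)$ converge to each other exponentially in $y$; this follows directly from the basic contraction result already recalled before the lemma, applied to the Jacobian $\partial f/\partial y$ of the auxiliary system. In particular, $\|x(t)-y^\star(t)\|$ decays exponentially to zero as $t\to\infty$.

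Finally, I would transfer the property from $y^\star$ to $x$. Since $y^\star(t)$ satisfies the specified smooth property, and $x(t)$ approaches $y^\star(t)$ exponentially fast, smoothness of the property ensures that $x(t)$ verifies it exponentially as well — this is the sense in which the original $x$-system is ``partially contracting'' to the desired behavior. The only subtle step is the last one: one must check that the property in question is indeed preserved under exponential convergence of trajectories (for typical properties such as tracking a specific solution, lying on an invariant manifold, or synchronization, smoothness makes this automatic, which is why the statement includes the qualifier ``smooth''). This observation is the main, though mild, obstacle; the rest is a direct application of the contraction framework already established.
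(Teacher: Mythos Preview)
Your proposal is correct and is precisely the standard partial-contraction argument: recognise that $x(t)$ is itself a particular solution of the auxiliary $y$-system, use contraction in $y$ to force exponential convergence of $x(t)$ to the distinguished solution $y^\star(t)$, and conclude that the smooth property carried by $y^\star$ is inherited exponentially by $x$. There is no gap.

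For comparison with the paper: the paper does not actually give its own proof of this lemma. Immediately after the statement it simply writes ``The proof of this lemma is presented in~\cite{partialcontraction}'' and moves on. Your argument is exactly the one found in that reference (the Wang--Slotine partial-contraction result), so your proposal is not merely compatible with the paper but supplies what the paper outsourced.
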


The proof of this lemma is presented in ~\cite{partialcontraction}. If the virtual system admits two particular solutions and contracting in nature, then two particular solutions will converge towards each other. The design understanding should be to design a virtual system that will have estimator dynamics as a particular solution and system dynamics as another particular solution. The salient property of contraction analysis is that it can quantify the robustness of an external perturbation.

% \textbf{Lemma 2:}
\begin{lem}\label{lemma2}
Consider that the nominal system $$\dot{x}=f(x)$$ is  contracting and the perturbed model $$\dot{x}_p=f(x_p)+d(t),$$ where ${x}_p$ is the perturbed state, $d(t)$ be a vanishing perturbation satisfying $|d(t)|\leq c_1e^{-c_2t}$ for some $c_1,c_2 > 0$ and $t \geq 0$,
Then, there exist constants $k_1$, $k_2 > 0$ such that the following property holds $\forall t>0$
\begin{equation}
    |x(t)-x_p(t)|< e^{-k_1t}(k_2+|x_0-z_0|)
\end{equation}
\end{lem}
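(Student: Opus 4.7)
The plan is to derive a scalar differential inequality for the finite separation $e(t):=x(t)-x_p(t)$ between the nominal and perturbed trajectories, then integrate it against the vanishing perturbation bound to extract the two exponential rates. The contraction hypothesis on $\dot x=f(x)$ gives a uniform rate $\lambda>0$ such that (choosing the Euclidean matrix measure from Table~\ref{table_matrix_measure}) $\mu_2(J(x))=\lambda_{\max}\!\bigl((J(x)+J(x)^T)/2\bigr)\le -\lambda$ on the contraction region. Writing
\begin{equation*}
f(x)-f(x_p)=\left(\int_0^1 J\bigl(x_p+s(x-x_p)\bigr)\,ds\right)(x-x_p)
\end{equation*}
via the mean-value theorem, I can contract with $e^T$ and use the symmetric-part bound to obtain $e^T\bigl(f(x)-f(x_p)\bigr)\le -\lambda\,|e|^2$.

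Next I would form $V(t)=|e(t)|$ and compute its upper-right Dini derivative along the dynamics of \eqref{eq:nosy} and the perturbed model. Using the inequality above together with the Cauchy--Schwarz estimate $e^T d(t)\le |e|\,|d(t)|$ gives the scalar comparison inequality
\begin{equation*}
D^+V(t)\;\le\;-\lambda\,V(t)+|d(t)|\;\le\;-\lambda\,V(t)+c_1 e^{-c_2 t}.
\end{equation*}
The comparison lemma / variation-of-parameters formula then yields
\begin{equation*}
V(t)\;\le\;e^{-\lambda t}|x_0-x_{p,0}|+c_1\!\int_0^t e^{-\lambda(t-s)}e^{-c_2 s}\,ds.
\end{equation*}

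The last step is to evaluate this convolution and cast it in the stated form. If $\lambda\neq c_2$ the integral equals $\frac{c_1}{|\lambda-c_2|}\bigl(e^{-\min(\lambda,c_2)t}-e^{-\max(\lambda,c_2)t}\bigr)$, which is at most $\frac{c_1}{|\lambda-c_2|}e^{-\min(\lambda,c_2)t}$; setting $k_1=\min(\lambda,c_2)$ and $k_2=\frac{c_1}{|\lambda-c_2|}$, and noting $e^{-\lambda t}\le e^{-k_1 t}$, delivers $|x(t)-x_p(t)|\le e^{-k_1 t}\bigl(k_2+|x_0-x_{p,0}|\bigr)$. The borderline case $\lambda=c_2$ gives a $t e^{-\lambda t}$ term that is absorbed into $e^{-k_1 t}$ for any $k_1<\lambda$ at the cost of enlarging $k_2$.

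The main obstacle I anticipate is conceptual rather than computational: contraction is stated for infinitesimal virtual displacements ($\delta x$), so bridging to a \emph{finite} separation requires either the mean-value/path-integral device used above or an integration of the differential length along a geodesic of the contraction metric. A secondary technicality is justifying the Dini-derivative inequality at instants where $V(t)=0$; this is standard but deserves explicit mention since $|e(t)|$ is not differentiable there. Once these points are handled, the remainder reduces to the convolution estimate outlined above and the identification of $k_1,k_2$.
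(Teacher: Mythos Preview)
The paper does not actually prove Lemma~\ref{lemma2}; it is stated as a preliminary robustness result from the contraction literature, in the same spirit as Lemma~\ref{lemma1} (whose proof is deferred to a reference). There is therefore no in-paper proof to compare against.

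Your argument is the standard one and is correct: the mean-value/path-integral representation of $f(x)-f(x_p)$ together with the uniform bound $\mu_2(J)\le-\lambda$ yields the scalar comparison inequality $D^+|e|\le-\lambda|e|+c_1e^{-c_2 t}$, and the variation-of-parameters estimate then gives the stated exponential bound with $k_1=\min(\lambda,c_2)$ and $k_2=c_1/|\lambda-c_2|$ (absorbing the degenerate $\lambda=c_2$ case as you describe). The two caveats you raise---passing from the infinitesimal $\delta x$ formulation to finite separations, and the nondifferentiability of $|e|$ at zero---are genuine but routine, and your handling of both is adequate. One cosmetic point: the $z_0$ in the lemma's conclusion is evidently a typo for $x_{p}(0)$, which you have silently corrected.
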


\subsection{Observer Design for Artificial Pancreas}\label{observer}
As discussed in Section 1, the state estimation algorithm is necessary to retrieve information about insulin concentrations in the body, which is of paramount importance in the case of a practical APS.  A nonlinear observer of the extended Luenberger structure is considered for estimating the state variables concerning glucose and insulin concentrations in the plasma and subcutaneous compartments. The dynamical equations of the observer for the system in \eqref{sys} are provided below
\begin{equation}\label{ob}
\begin{array}{lll}
\dot{\hat{x}}_1&=&-p_1\hat{x}_1-\hat{x}_2{x}_1+EGP+l_1(\hat{x}_1-x_1)\\
\dot{\hat{x}}_2&=&-p_2 \hat{x}_2+p_3\hat{x}_3+l_2(\hat{x}_1-x_1)\\
\dot{\hat{x}}_3&=&-p_4 \hat{x}_3+p_4\hat{x}_4+l_3(\hat{x}_1-x_1)\\
\dot{\hat{x}}_4&=&-p_5 \hat{x}_4+u(t)+l_4(\hat{x}_1-x_1)
\end{array}
\end{equation}
where $\hat{x}_i,i=1,...,4$ are the estimated system states and $L=[l_1\ l_2\ l_3\ l_4]^T$ is the unknown observer gain needs to be selected to ensure the convergence of estimated state, $\hat{x}_i,i=1,\dots,4$ in \eqref{ob} to the true states, $x_i,i=1,\dots,4$ in \eqref{sys}. The existence and computation of the observer gain, $L$ is stated in the form of following theorem. 

\begin{thm}\label{theorem-1}
Consider the system dynamics in \eqref{sys} and the observer in \eqref{ob} with the observer gain, $L=[l_1~l_2~l_3~l_4]$. The estimated states, $\hat x_i$ in \eqref{ob} will converge to the true states, $x_i$ in \eqref{sys}, exponentially from arbitrary initial conditions if the observer gain $L$ satisfies the following linear inequalities
% \begin{equation}\label{ob_ine}
% \begin{array}{lr}
% -p_{1}+l_1+\frac{x_1}{2}-\frac{l_2}{2}+\frac{l_3}{2}+\frac{l_4}{2}<0 & -p_{2}+\frac{x_1}{2}-\frac{l_{2}}{2}+\frac{p_{3}}{2}<0 \\
% -p_{4}+\frac{l_{3}}{2}+\frac{p_{3}}{2}+\frac{p_{4}}{2}<0 & -p_{5}+\frac{l_{4}}{2}+\frac{p_{4}}{2}<0.
% \end{array}
% \end{equation}

\begin{equation}\label{ob_ine}
\begin{array}{lr}
-p_{1}+l_1+l_2+l_3+l_4<0 & -p_{2}+x_1<0 \\
-p_{4}+p_{3}<0 & -p_{5}+p_{4}<0.
\end{array}
\end{equation}

\end{thm}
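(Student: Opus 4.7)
The plan is to invoke Lemma \ref{lemma1} (partial contraction), deferring the meal input to Lemma \ref{lemma2}. The first step is to introduce an auxiliary ``virtual'' system in a dummy state $y\in\mathbf{R}^4$ that admits both the plant trajectory $x(t)$ and the observer trajectory $\hat{x}(t)$ as particular solutions. Reading off (\ref{sys}) and (\ref{ob}), a natural choice is
\begin{equation*}
\begin{array}{lll}
\dot{y}_1 &=& -p_1 y_1 - y_2\, x_1 + EGP + l_1(y_1-x_1),\\
\dot{y}_2 &=& -p_2 y_2 + p_3 y_3 + l_2(y_1-x_1),\\
\dot{y}_3 &=& -p_4 y_3 + p_4 y_4 + l_3(y_1-x_1),\\
\dot{y}_4 &=& -p_5 y_4 + u(t) + l_4(y_1-x_1),
\end{array}
\end{equation*}
in which $x_1(t)$ and $u(t)$ are treated as exogenous signals. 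Substituting $y=\hat{x}$ reproduces (\ref{ob}) term-for-term, whereas $y=x$ reproduces (\ref{sys}) modulo the meal term $R_a(t)$ missing from the first equation.

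Next I would form the Jacobian $J(x_1)=\partial f/\partial y$,
\begin{equation*}
J(x_1)=\begin{pmatrix} -p_1+l_1 & -x_1 & 0 & 0 \\ l_2 & -p_2 & p_3 & 0 \\ l_3 & 0 & -p_4 & p_4 \\ l_4 & 0 & 0 & -p_5 \end{pmatrix},
\end{equation*}
and apply the $\mu_1$ matrix measure of Table \ref{table_matrix_measure}. Because $\mu_1(A)$ is the maximum over columns $j$ of $a_{jj}+\sum_{i\neq j}|a_{ij}|$, the four column sums evaluate, after using $x_1\ge 0$ and restricting attention to gains with $l_2,l_3,l_4\ge 0$, to exactly the four inequalities in (\ref{ob_ine}). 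Hence whenever $L$ satisfies (\ref{ob_ine}) one obtains $\mu_1(J(x_1))<0$ uniformly along every admissible glucose trajectory, so the virtual system is contracting in the $\mu_1$ sense.

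Lemma \ref{lemma1} then forces the two particular solutions $y=x$ and $y=\hat{x}$ of the virtual system to collapse onto each other exponentially, giving $\hat{x}(t)-x(t)\to 0$ exponentially. To close the argument against the true plant, which carries the meal signal absent from the virtual system, I would invoke Lemma \ref{lemma2} with perturbation $d(t)=(R_a(t),0,0,0)^{T}$: the two-compartment gut model drives $R_a(t)$ to zero exponentially between and after meals, so Lemma \ref{lemma2} transports the exponential convergence of the virtual system to the true observer error up to an exponentially vanishing transient.

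The delicate step is the column-$2$ condition $-p_2+x_1<0$: the Jacobian entry $J_{12}=-x_1$ makes $\mu_1(J)$ genuinely state-dependent, so the pointwise inequality must be lifted to a uniform bound by restricting attention to a forward-invariant physiological operating set for the glucose state $x_1$. The remaining three inequalities in (\ref{ob_ine}) are purely parametric and linear in $L$, so once the uniformity issue is settled, the admissible gain set is just a linear feasibility polytope.
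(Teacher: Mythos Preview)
Your argument is correct and follows the paper's own route essentially verbatim: the same virtual system, the same Jacobian, the same $\mu_1$ column-sum computation, and the same appeal to Lemma~\ref{lemma1}. In fact you are more careful than the paper on two points it glosses over---the implicit sign restriction $l_2,l_3,l_4\ge 0$ needed to drop the absolute values in column~1, and the fact that $y=x$ only solves the virtual system up to the missing $R_a(t)$ term, which you patch via Lemma~\ref{lemma2} while the paper simply asserts that $s=x$ is a particular solution.
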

 
\begin{proof}
Let us consider a virtual system as
\begin{equation}
\begin{array}{lll}\label{virtual}
\dot{s}_1&=&-p_1s_1-x_1{s}_2+EGP+l_1(s_1-x_1)\\
\dot{s}_2&=&-p_2 s_2+p_3s_3+l_2(s_1-x_1)\\
\dot{s}_3&=&-p_4 s_3+p_4s_4+l_3(s_1-x_1)\\
\dot{s}_4&=&-p_5 s_4+u(t)+l_4(s_1-x_1)
\end{array}
\end{equation}
where $s_i,i=1,...,4$ are the states of virtual system. It has two particular solutions. For $s_i=x_i$, the virtual system represents the system dynamics in \eqref{sys}, and for $s_i=\hat{x}_i$, the virtual system represents the observer dynamics in \eqref{ob}. Now, the corresponding differential dynamics of the virtual system in \eqref{virtual} can be obtained as\\
\begin{equation}
  \left[\begin{array}{c}
\delta \dot{s}_{1} \\
\delta \dot{s}_{2} \\
\delta \dot{s}_{3} \\
\delta \dot{s}_{4}
\end{array}\right]=\left[\begin{array}{cccc}
-p_{1}+l_{1} & -x_{1} & 0 & 0 \\
l_{2} & -p_{2} & p_{3} & 0 \\
l_{3} & 0 & -p_{4} & p_{4} \\
l_{4} & 0 & 0 & -p_{5}
\end{array}\right]\left[\begin{array}{c}
\delta s_{1} \\
\delta s_{2} \\
\delta s_{3} \\
\delta s_{4}
\end{array}\right]\\  
\end{equation}

In the compact form, the differential dynamics becomes

\begin{equation}\label{diff_obs}
    \delta\dot{s}=J\delta s
\end{equation}

% The contraction metric, $M$ can be computed as the symmetric part of $J$ such as
% \begin{eqnarray}\label{metric}
%     M&=&\frac{1}{2}\left(J+J^{T}\right) \\ \nonumber
%     &=& \left[\begin{array}{cccc}
% -p_{1}+l_{1} & \left(-x_{1}+l_{2}\right) / 2 & l_{3} / 2 & l_{4} / 2 \\
% \left(-x_{1}+l_{2}\right) / 2 & -p_{2} & p_{3} / 2 & 0 \\
% l_{3} / 2 & p_{3} / 2 & -p_{4} & p_{4} / 2 \\
% l_{4} / 2 & 0 & p_{4} / 2 & -p_{5}
% \end{array}\right]
% \end{eqnarray}

 If the matrix measure of $J$, in \eqref{diff_obs} is  negative, then the virtual system dynamics in \eqref{virtual} will be contracting. So, one needs to choose the observer gains, $L=[l_1\ l_2\ l_3\ l_4]^T$ in such a way that $\mu_{1}(.)<0$. (Here, the matrix measure is based on $1-norm$). Referring to Table \ref{table_matrix_measure}, the conditions for the observer gains can be computed as in \eqref{ob_ine}. If the inequalities in \eqref{ob_ine} are satisfied, then the virtual system in \eqref{virtual} will be a contracting system. Hence, by Lemma \ref{lemma1}, it can be inferred that the estimated states, $\hat x_i$ in \eqref{ob} converges to the corresponding true states, $x_i$ in \eqref{sys} exponentially.
\end{proof}
 From the physiological knowledge, it is well known that state $x_1$ (blood glucose concentration) evolve in a compact set $X$ \cite{NATH2018289}. Thus, the compact set's boundary values can be substituted as an argument to the inequality in \eqref{ob_ine} and tune the observer gain to satisfy the inequality. 

\begin{rmk}
It is important to note that the inequalities in \eqref{ob_ine} involve only the information of blood glucose concentration, $x_1$, which is the measurable quantity in AP. It provides less conservatism than the observer designs in \cite{guarcost,ijs,aem}. This is because of the requirement of information on the bounds of all the state variables. Apart from blood glucose concentration, it is difficult to get the bounds of other state variables precisely. 
\end{rmk}

\begin{rmk}
The contraction theory provides us with an alternative approach for observer design in contrast to the Lyapunov stability theory. Furthermore, the observer designs in \cite{guarcost,ijs,aem} are based on quadratic Lyapunov functions and involve linear approximations of the nonlinearity, such as Lipschitz condition \cite{aem} and one-sided quasi-Lipschitz condition. Unlike these methods, the proposed observer in \eqref{ob} is less conservative since it does not require any of the above mentioned assumptions. 
\end{rmk}

\begin{figure*}[h!]
\centering
\subfloat[]{
% \hspace*{0.5cm}
  \includegraphics[width=70mm,height=3cm]{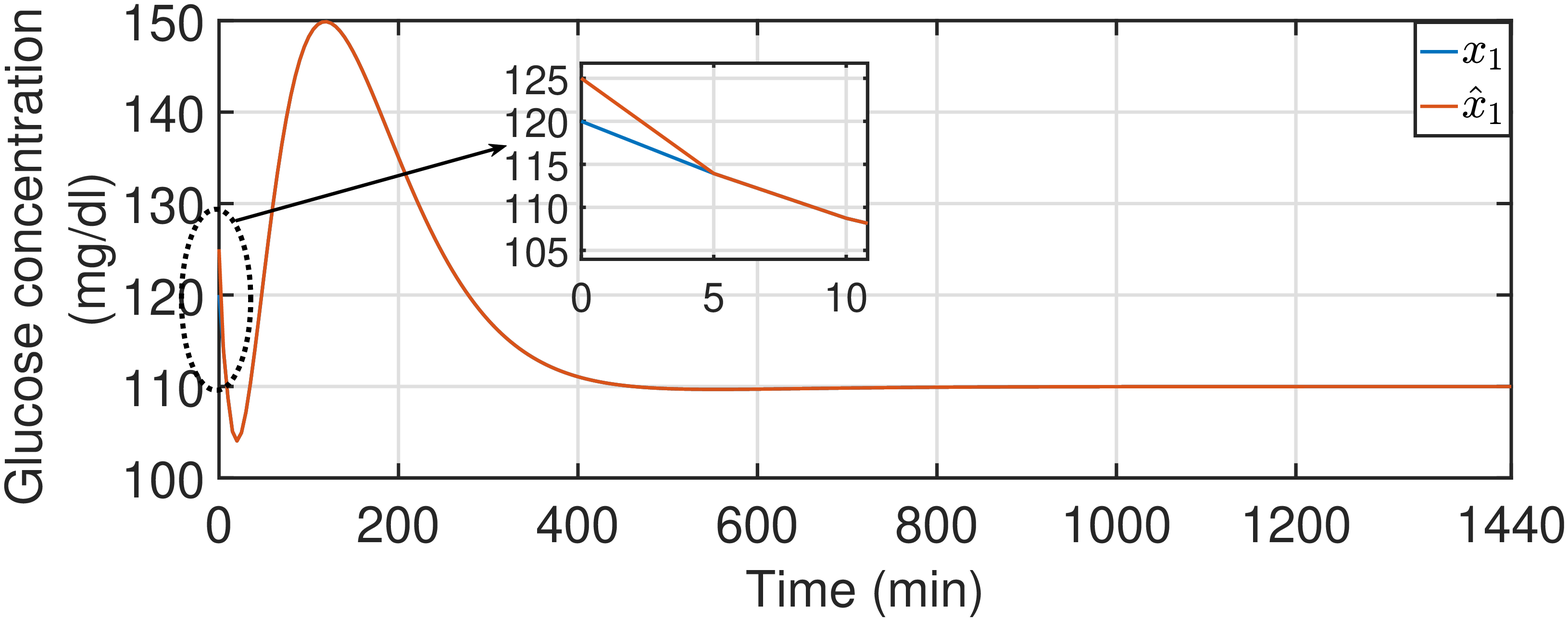}
}
\subfloat[]{
% \hspace*{0.5cm}
  \includegraphics[width=70mm,height=3cm]{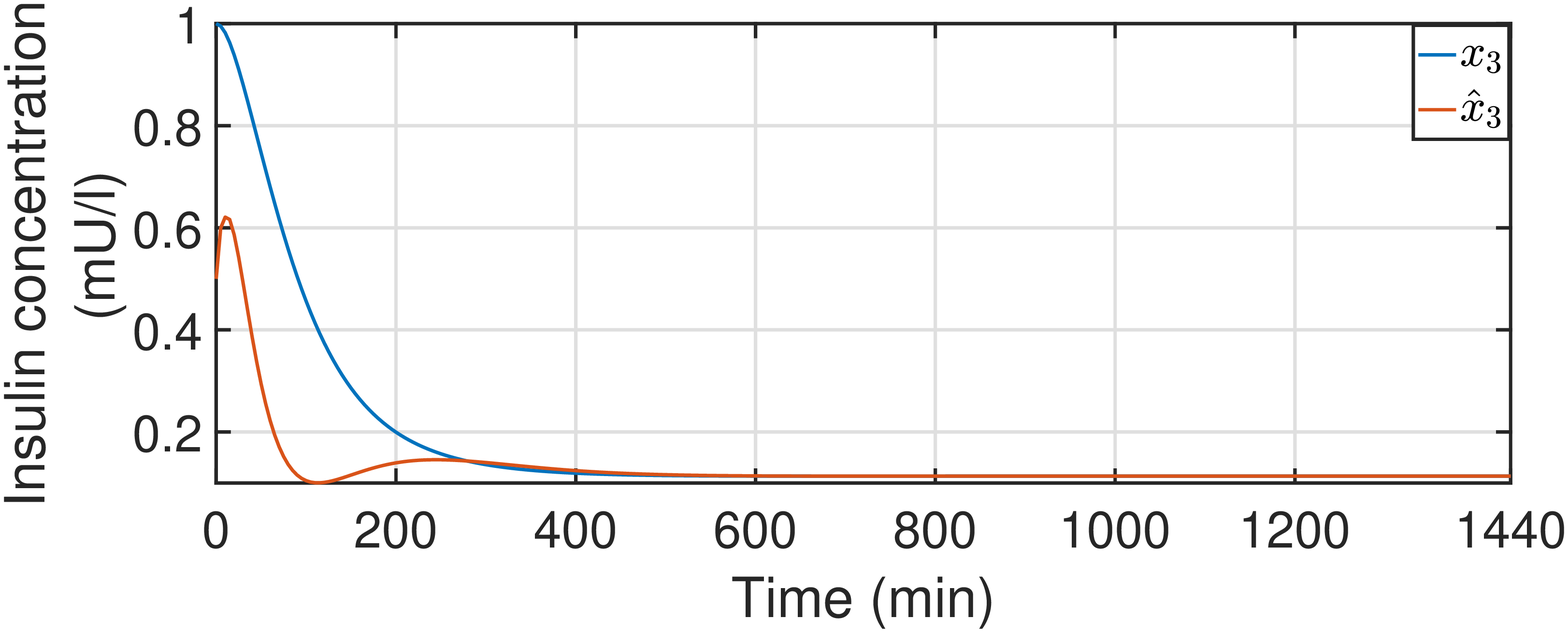}
}
\\
\subfloat[]{
% \hspace*{0.5cm}
  \includegraphics[width=70mm,height=3cm]{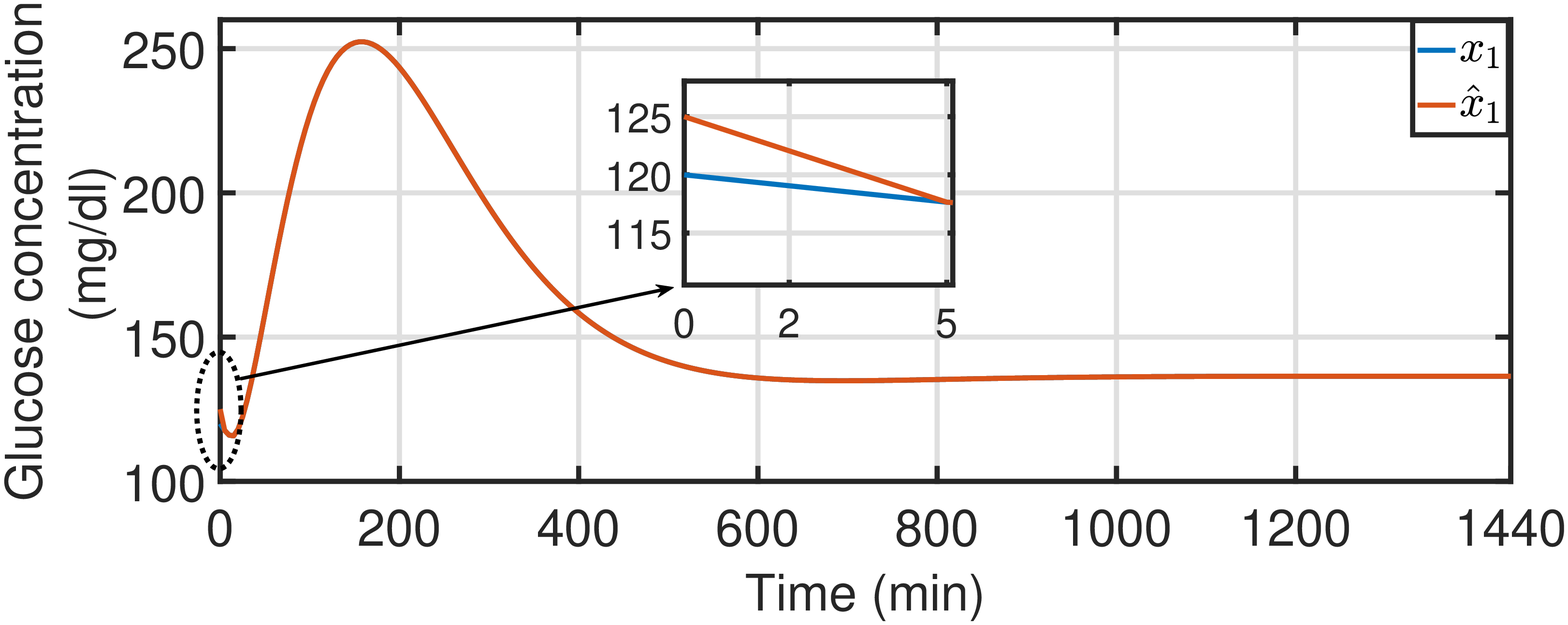}
}
\subfloat[]{
% \hspace*{0.5cm}
  \includegraphics[width=70mm,height=3cm]{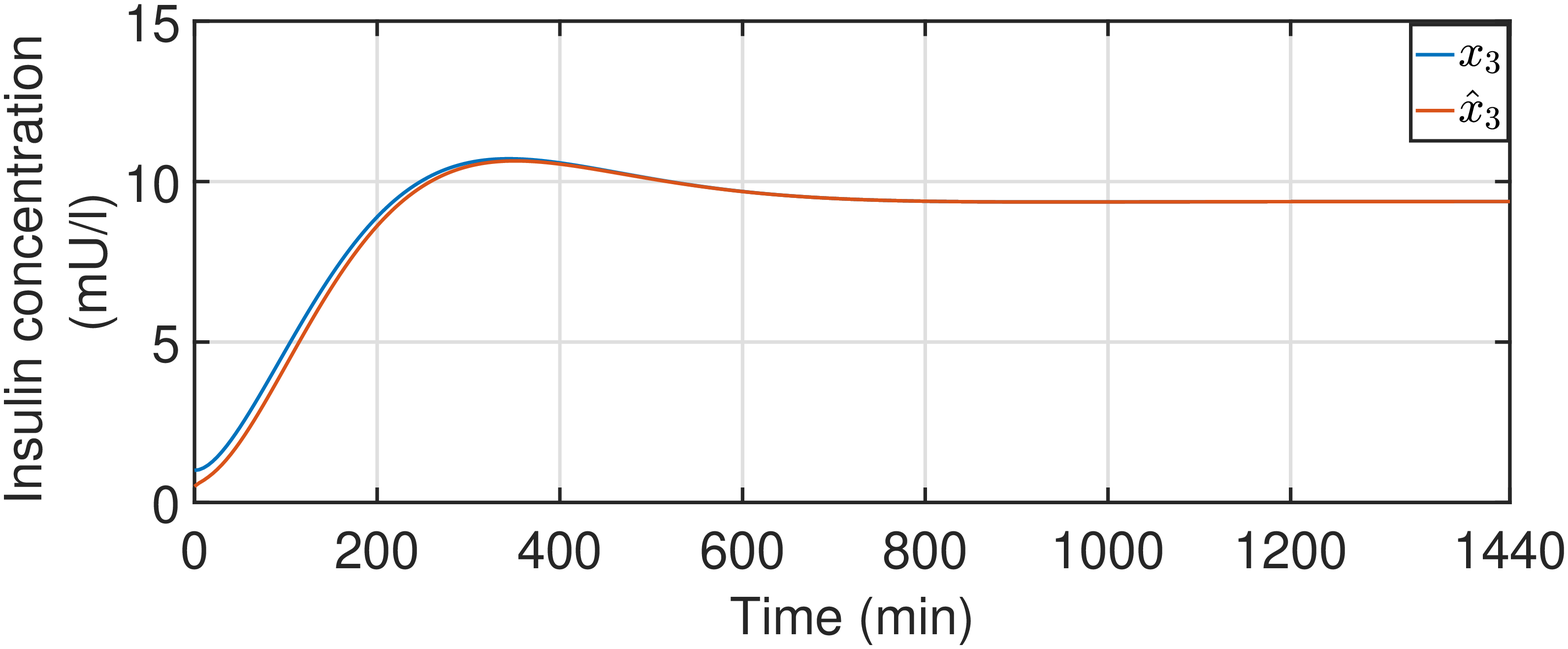}
}
\\
\subfloat[]{
% \hspace*{0.5cm}
  \includegraphics[width=70mm,height=3cm]{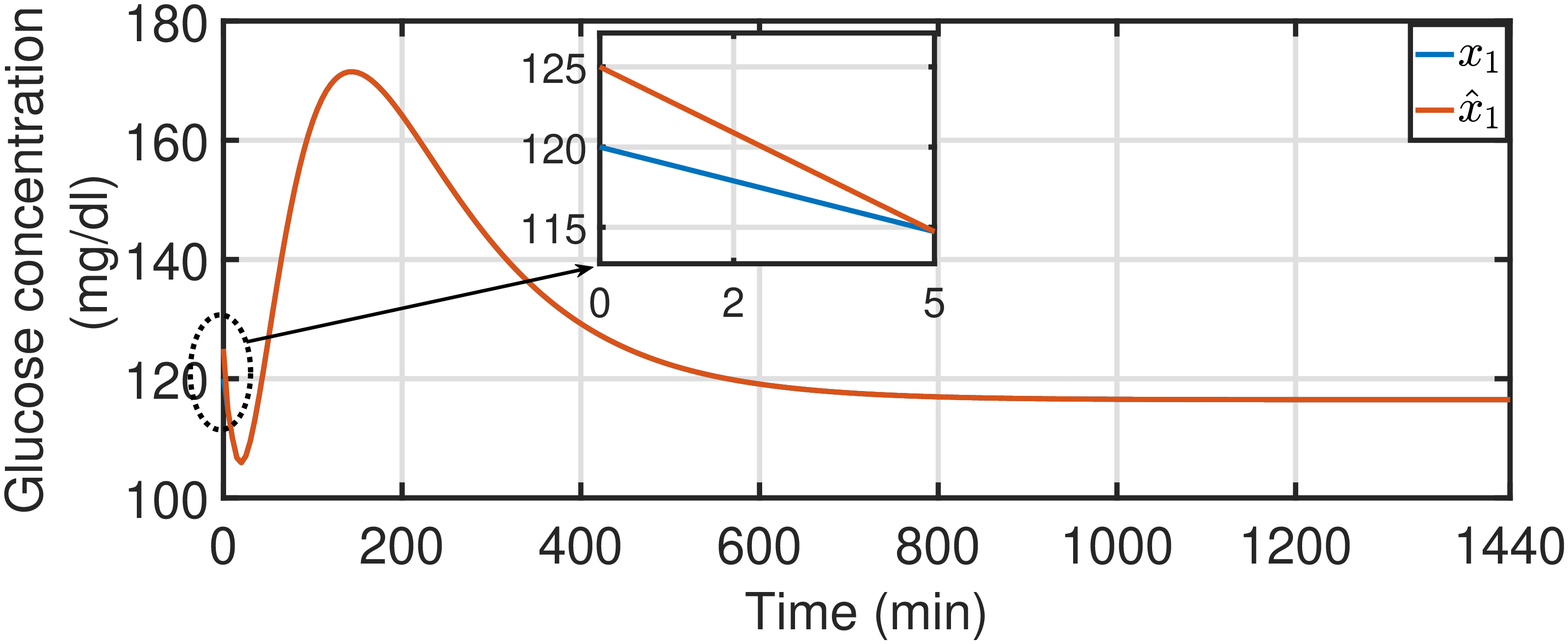}
}
\subfloat[]{
% \hspace*{0.5cm}
  \includegraphics[width=70mm,height=3cm]{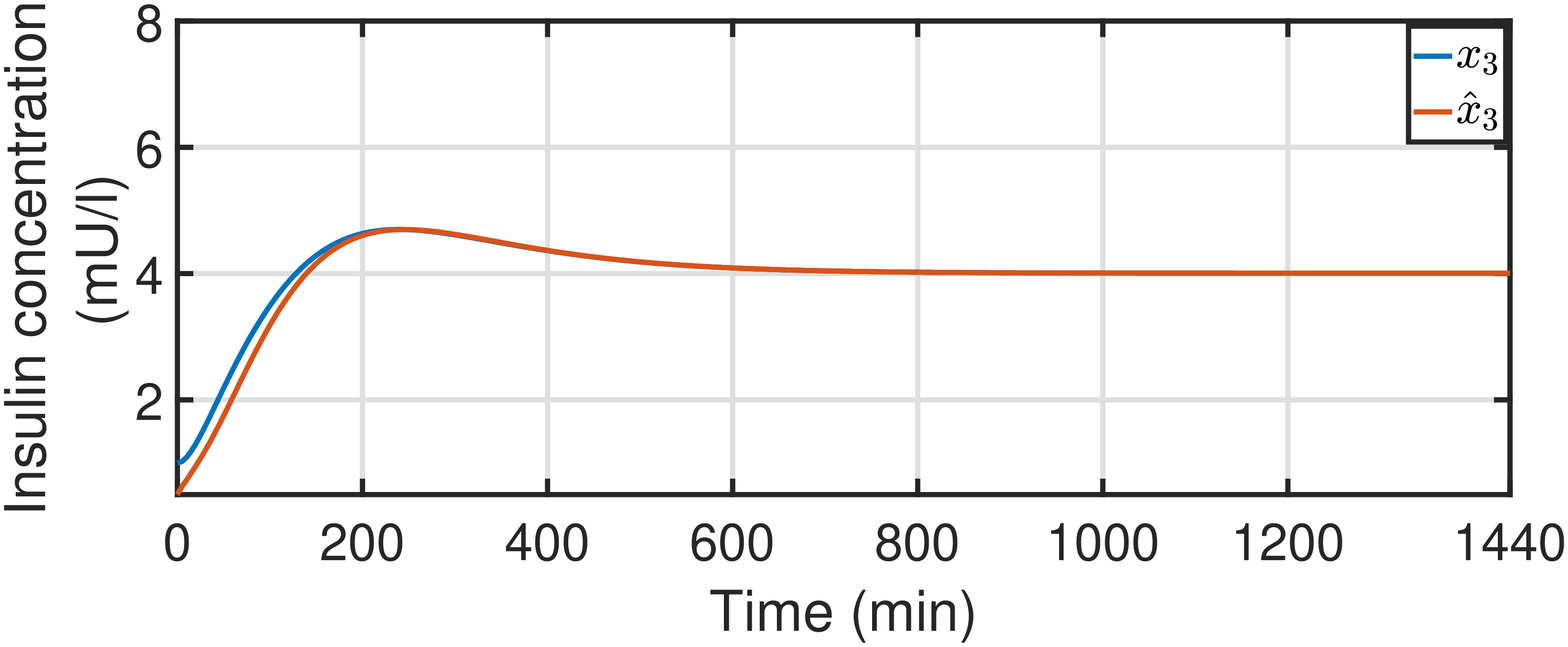}
}
\caption{Results for the estimation of plasma glucose concentration and plasma insulin concentration of (a) subject 1,(b) subject 2 and (c) subject 3.}
\label{obs_est_plot}
\end{figure*}

\subsubsection{Estimation results:} The performance of the observer is evaluated in terms of estimation of the plasma glucose concentration and the plasma insulin concentration for three T1D subjects. A hypothetical 24 h scenario is considered with the assumption that the T1D subjects receive a single meal of 70 g carbohydrate at t=10 min followed by a fasting period. The parameters of the T1D subjects are referred from Table \ref{tab1}. The initial conditions are chosen as $x_1$=120 mg/dl, $x_2=0.01~{min^{-1}}$, $x_3=1$ mU/l and $x_4=1$ mU/l. Observer gains are chosen by satisfying the inequalities in \eqref{ob_ine} while considering minimum glucose concentration ($x_1=0$ mg/dl). The corresponding results of the state estimation are depicted in Fig. \ref{obs_est_plot}. It can be observed that the convergence of the estimate $\hat x_1$ to the true state $x_1$ is very fast for all the cases. Although $\hat x_3$ converges to the actual state $x_3$ in a finite time, the convergence in case of T1D Subject 1 is comparatively slower than Subjects 2 and 3. Here, it is verified that the observer is capable of estimating both the glucose and insulin concentrations. Next, the design of the control law based on the contraction analysis is presented. 

\subsection{Controller Design for Artificial Pancreas}
The most important feature that an effective control law should possess for its implementation in APS is that it should avoid severe hypoglycemia ($x_1<50$ mg/dl) at any circumstance. Additionally, it should ensure an automated continuous insulin delivery to minimize the postprandial glucose excursions and minimize the risks of hyperglycemia ($x_1>180$ mg/dl). The blood glucose should be below 180 mg/dl within 1 h after meal intake. All these control objectives need to be achieved by utilizing the state estimates provided by the observer and in the presence of parametric uncertainty and exogenous meal disturbances.\par 

The convergence of the estimated states to the actual states has been already established in Theorem \ref{theorem-1}. In the next step, the deviated dynamics in \eqref{dev_sys} is considered to formulate the control problem as a regulation problem. By ensuring the deviated state $x_d$ converge to zero, the actual states converge to the equilibrium point. To achieve this objective, the a feedback proportional control law is chosen as
\begin{equation}\label{control}
    u=K\hat{x}_d
\end{equation}
where $K=[k_1\ k_2\ k_3\ k_4]$ is the controller gain matrix and $\hat x_d=[\hat x_{1d}~\hat x_{2d}~\hat x_{3d}~\hat x_{4d}]^T$ is the estimate of the deviated state, $x_d=[x_{1d},x_{2d},x_{3d},x_{4d}]^T$. The information of $\hat{x}_d$ can be extracted from the transformation to the estimated states $\hat{x}$ as introduced in \eqref{transform} as
\begin{equation}\label{xd_hat}
    [\hat x_{1d}~\hat x_{2d}~\hat x_{3d}~\hat x_{4d}]^T=[\hat x_1~\hat x_2~\hat x_3~\hat x_4]^T-[EGP/p_1~0~0~0]^T.
\end{equation}
The estimation error can be obtained as
\begin{equation}\label{err}
    e=x_d-\hat x_d
\end{equation}
By substituting the control law in \eqref{control}, the closed loop deviated dynamics \eqref{dev_sys} can be re-written in compact form as
\begin{equation}\label{con1}
\begin{split}
    \dot{x}_d=&f(x_d)+BK\hat x_d
    \end{split}
\end{equation}
which can be re-written as
\begin{equation}\label{con2}
\begin{split}
    \dot{x}_d=&f(x_d)+BK(x_d-e)\\
             =&f(x_d)+BKx_d-BKe\\
             =&f(x_d)_{CL}+d(t)\\
    \end{split}
\end{equation}\\
where $f(x_d)_{CL}=f(x_d)+BKx_d$ and $d(t)=-BKe$. In general, for a state-feedback based control law, $u=K\hat x_d$, the controller gain matrix, $K$ needs to be designed.The controller gains, $k_i,i=1,\dots,4$ in \eqref{control} need to be selected to ensure the stability of closed loop system that includes the observer states in \eqref{ob} and the control law in \eqref{control}. The stability analysis of closed loop system using contraction theory is presented in the form of following theorem.  
\begin{thm}
Consider the closed loop system in \eqref{con2} with $d(t)=-BKe(t)$ be the vector function satisfying $|d(t)|\leq c_1e^{-c_2t}$, there exist a controller gain matrix, $K=[k_1,k_2,k_3,k_4]^T$ dictated by the following relations in \eqref{con_ine} such that $f(x_d)_{CL}$ is contracting. 
\begin{equation}\label{con_ine}
\begin{array}{lll}
-p_1-x_{2d}+k_{1} p_{6}&<&0 \\ 
-p_2+k_{2} p_{6}+(EGP/p_1)-x_{1d}&<&0 \\
-p_4+p_3+k_3 p_6&<&0 \\ 
k_4p_6-p_5+p_4&<&0,
\end{array}
\end{equation}
The states of closed loop system (controller-observer put together) will exponentially converge to the equilibrium. Further, it will imply that the following relation will be satisfied 
\begin{equation}\label{expo_decay}
    |x_d|< e^{-\alpha_1 t}(\alpha_2 t+|x_{d0}|)
\end{equation}
where $\alpha_1, \alpha_2>0$ are two positive constants that depend upon the initial conditions and the minimum contraction rates of the observer and the system.
\end{thm}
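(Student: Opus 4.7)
The plan is to mirror the structure of the proof of Theorem~\ref{theorem-1}: express the closed-loop dynamics as a contracting nominal vector field driven by a vanishing perturbation, then invoke Lemma~\ref{lemma2}. Rewriting \eqref{con2} as $\dot{x}_d=f(x_d)_{CL}+d(t)$, with $f(x_d)_{CL}=f(x_d)+BKx_d$ and $d(t)=-BKe(t)$, isolates the unperturbed closed loop from the observer error that feeds back through $u=K\hat{x}_d=Kx_d-Ke$.

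Next I would compute the Jacobian of $f(x_d)_{CL}$. Since $B=[0,0,0,p_6]^{T}$, the control contributes only to row~4, so
\[
J_{CL}=\begin{bmatrix}
-p_1-x_{2d} & -EGP/p_1-x_{1d} & 0 & 0\\
0 & -p_2 & p_3 & 0\\
0 & 0 & -p_4 & p_4\\
p_6 k_1 & p_6 k_2 & p_6 k_3 & -p_5+p_6 k_4
\end{bmatrix}.
\]
Applying the $1$-norm matrix measure column-by-column exactly as in Theorem~\ref{theorem-1} and demanding $\mu_1(J_{CL})<0$ produces the four column-sum inequalities \eqref{con_ine}. Since the physiological states $x_{1d}, x_{2d}$ evolve in a compact set, one can evaluate the inequalities at the boundary values and choose $K$ so they hold uniformly; this yields contraction of $f(x_d)_{CL}$ at some rate $\lambda_c>0$ and, because $x_d=0$ is an equilibrium of the nominal system, exponential convergence of all trajectories of $\dot{x}_d=f(x_d)_{CL}$ to the origin.

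The last step is to handle the disturbance $d(t)=-BKe(t)$. Theorem~\ref{theorem-1} already guarantees $|e(t)|\le\kappa e^{-\lambda_o t}$ for some $\lambda_o>0$, so $|d(t)|\le c_1 e^{-c_2 t}$ with $c_1=\|BK\|\kappa$ and $c_2=\lambda_o$, exactly the hypothesis of Lemma~\ref{lemma2}. Comparing $\dot{x}_d=f(x_d)_{CL}$ with its perturbed version $\dot{x}_d=f(x_d)_{CL}+d(t)$ and invoking Lemma~\ref{lemma2}, we obtain the bound \eqref{expo_decay} with $\alpha_1$ built from $\min(\lambda_c,\lambda_o)$ and $\alpha_2$ from $c_1$, $\|BK\|$, and the mismatch between $\lambda_c$ and $\lambda_o$; the linear prefactor $\alpha_2 t$ is exactly what appears in the standard estimate when the two exponential rates coincide (or nearly so) in the convolution bound.

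The main obstacle I foresee is cosmetic but important: the honest $\mu_1$ column sums contain $|p_6 k_i|$ and $|EGP/p_1+x_{1d}|$, yet \eqref{con_ine} is written without absolute values. To make the step from $\mu_1(J_{CL})<0$ to \eqref{con_ine} rigorous I would either specify the sign pattern of the gains (e.g.\ $k_1,k_2,k_3\le 0$ so that $|p_6 k_i|=-p_6 k_i$ absorbs into the diagonal) or interpret \eqref{con_ine} as sufficient conditions after bounding $x_{1d},x_{2d}$ by their extremes on the compact operating region. A secondary subtlety is the separation-principle flavour of the argument: one must check that the observer error dynamics are not themselves influenced by the control input in a way that breaks the exponential bound on $e(t)$, which is indeed the case here because in \eqref{virtual} the input $u(t)$ drops out of the Jacobian used for the observer contraction proof, so the cascade structure (observer drives controller) is genuine and Lemma~\ref{lemma2} applies without circularity.
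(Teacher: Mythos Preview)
Your proposal is correct and follows essentially the same route as the paper: write the closed loop as a contracting nominal system $f(x_d)_{CL}$ plus the vanishing perturbation $d(t)=-BKe(t)$, compute the Jacobian $J_c$, impose $\mu_1(J_c)<0$ to obtain \eqref{con_ine}, and then invoke Lemma~\ref{lemma2} together with the exponential observer-error decay from Theorem~\ref{theorem-1} to conclude \eqref{expo_decay}. Your additional care about the missing absolute values in the $\mu_1$ column sums and about the cascade (separation-principle) structure goes beyond what the paper makes explicit, but does not change the argument; note also that the $(1,2)$ entry of your $J_{CL}$ differs in sign from the paper's $J_c$ because the paper's displayed $f(x_d)$ carries a sign slip relative to \eqref{dev_sys}, so your Jacobian is the one consistent with \eqref{dev_sys}.
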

\begin{proof}
The proof of the above theorem is straightforward as stated below. Consider the closed loop dynamics as mentioned in \eqref{con2}\\
where $f(x_d)=\begin{bmatrix}
    -p_1x_{1d} + (EGP/p_1) x_{2d}-x_{1d}x_{2d} \\
    -p_2 x_{2d} + p_3 x_{3d} \\
   -p_4 x_{3d} + p_4 x_{4d}\\
    -p_5 x_{4d} 
    \end{bmatrix}$, $B=\begin{bmatrix}
    0\\
    0 \\
    0\\
    p_6 
    \end{bmatrix}$ \\
    then $f(x_d)_{CL}$ becomes 
    \begin{equation*}\label{nominal}
    f(x_d)_{CL}=\begin{bmatrix}
    -p_1x_{1d} + (EGP/p_1) x_{2d}-x_{1d}x_{2d} \\
    -p_2 x_{2d} + p_3 x_{3d} \\
   -p_4 x_{3d} + p_4 x_{4d}\\
    -p_5 x_{4d}
    \end{bmatrix}+\begin{bmatrix}
    0\\
    0 \\
    0\\
    p_6 
    \end{bmatrix}{\begin{bmatrix}
    k_1\\
    k_2\\
    k_3\\
    k_4 
    \end{bmatrix}}^T\begin{bmatrix}
    x_{1d}\\
    x_{2d}\\
    x_{3d}\\
    x_{4d}
    \end{bmatrix}
    \end{equation*}\\
    In compact form,
    \begin{equation}\label{nominal}
    f(x_d)_{CL}=\begin{bmatrix}
    -p_1x_{1d} + (EGP/p_1) x_{2d}-x_{1d}x_{2d} \\
    -p_2 x_{2d} + p_3 x_{3d} \\
   -p_4 x_{3d} + p_4 x_{4d}\\
    -p_5 x_{4d}+p_6 k_1 x_{1d}+p_6k_2x_{2d}+ p_6k_3x_{3d}+p_6k_4x_{4d} 
    \end{bmatrix}
    \end{equation}\\
    The differential dynamics of closed loop system is in the form,
    \begin{equation}
        \delta \dot{x}_{d}=J_c \delta x_d
    \end{equation}\\
    where 
      $J_c=
    \begin{bmatrix}
    -p_1-x_{2d} & (EGP/p_1)-x_{1d} & 0 & 0\\
    0 & -p_2 & p_3 & 0\\
    0 & 0 & -p_4 & p_4\\
    k_1 p_6 & k_2 p_6 & k_3 p_6 & k_4 p_6-p_5
    \end{bmatrix}$\\
    
     Similar to the observer design in \ref{observer}, the controller gain $K$ can be selected in the framework of partial contraction stated in Lemma 1. Therefore, designed $K$ will ensure the closed loop system $(f(x_d)_{nominal})$ to be contracting to equilibrium and $e(t)\rightarrow 0$ exponentially as the observer dynamics achieves exponential convergence. Here, the term $d(t)=-BKe(t)$ can be treated as a decaying perturbation to the nominal system. Using Lemma 2, exponential convergence of states  to the equilibrium is guaranteed as given in \eqref{expo_decay}. 
\end{proof}

%The differential dynamics of nominal system is,
%\begin{multline}
    % \begin{bmatrix}
    % \delta \dot{x}_{1d}\\\delta \dot{x}_{2d}\\\delta \dot{x}_{3d}\\\delta \dot{x}_{4d}\\
    % \end{bmatrix}=
    % \begin{bmatrix}
    % -p_1-x_{2d} & (EGP/p_1)-x_{1d} & 0 & 0\\
    % 0 & -p_2 & p_3 & 0\\
    % 0 & 0 & -p_4 & p_4\\
    % k_1 p_6 & k_2 p_6 & k_3 p_6 & k_4 p_6-p_5
    % \end{bmatrix}
    %     \begin{bmatrix}
    % \delta x_{1d}\\\delta x_{2d}\\\delta x_{3d}\\\delta x_{4d}\\
    % \end{bmatrix}\\
    % = J_c \delta x_d 
%\end{multline}

For the nominal system to be contracting, we need to design $K$ such that the matrix measure of $J_c$ is negative. Using matrix measure, the design conditions can be obtained as \eqref{con_ine}. 
% \begin{equation}\label{symmetric_controller}
%         \begin{bmatrix}
%     -p_1-x_{2d} & \frac{1}{2}((EGP/p_1)-x_{1d}) & 0 & \frac{k_1 p_6}{2}\\
%     \ast & -p_2 & \frac{p_3}{2} & \frac{k_2 p_6}{2}\\
%     \ast & \ast & -p_4 & \frac{p_4+k_3 p_6}{2}\\
%     \ast & \ast & \ast & -k_4 p_6-p_5
%     \end{bmatrix} < 0
% \end{equation}

% The elements of the lower triangular part of the symmetric matrix in \eqref{symmetric_controller} are denoted in asterisk for conciseness.

\begin{rmk}
It is important to highlight that the observer and controller design can be carried out separately. The reason behind this is that the system comprising of the cascade interconnection of two individually contracting systems, is also contracting \cite{contraction}. Hence, this work is less restrictive than the other works on observer-based controller in \cite{guarcost,ijs,aem}.
\end{rmk}
%%%%%%%%%%%%%%%%%%%%%%%%%%%%%%%%%%%%%%%%%%%%%%%%%%%%%%%%%%%%
\section{Results}
For evaluating the efficacy of the proposed observer-based control technique, two broad simulation categories are undertaken as case studies. The first category represents a realistic daily scenario of T1D patients, where three meals of 75 g carbohydrates, a representative of breakfast, lunch, and dinner, are taken into consideration. The corresponding observer and controller gains are chosen based on the inequalities in \eqref{ob_ine} and \eqref{con_ine}, respectively. 

\subsection{Scenario 1: A single day, three meal scenario with nominal parameters}
\label{scenario1}
The T1D model, as presented in \eqref{sys}, is being considered to represent the T1DPs. This virtual simulation scenario of 24 h (equivalent to 1440 min) is designed to investigate the performance of the controller in addressing inter-patient variability. So three virtual T1DPs, namely, Subjects 1, 3, and 5, are taken into consideration and the corresponding parameters are adopted from \cite{ivp} as mentioned in Table \ref{tab1}. Three meals containing an equal amount of carbohydrate of 75 g are provided as breakfast, lunch, and dinner at t=10, 360, and 720 min, respectievely. It is assumed that the simulations start from a safe blood glucose level, $x_1$ of 120 mg/dl and initial conditions $x_2=0.01~{min^{-1}}$, $x_3=1$ mU/l and $x_4=1$ mU/l.

\begin{figure}[h!]
\centering
\subfloat[]{
% \hspace*{0.5cm}
  \includegraphics[width=70mm,height=3cm]{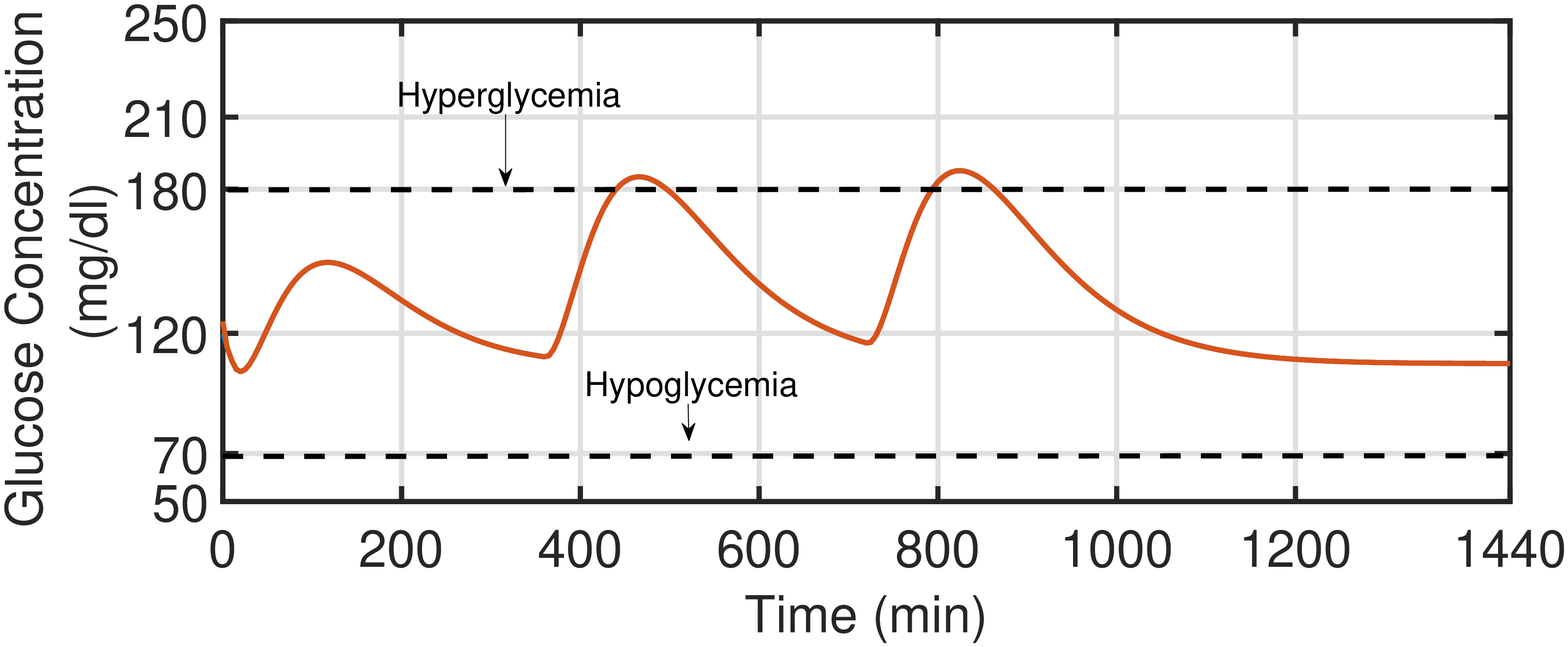}
}
\\
\subfloat[]{
% \hspace*{0.5cm}
  \includegraphics[width=70mm,height=3cm]{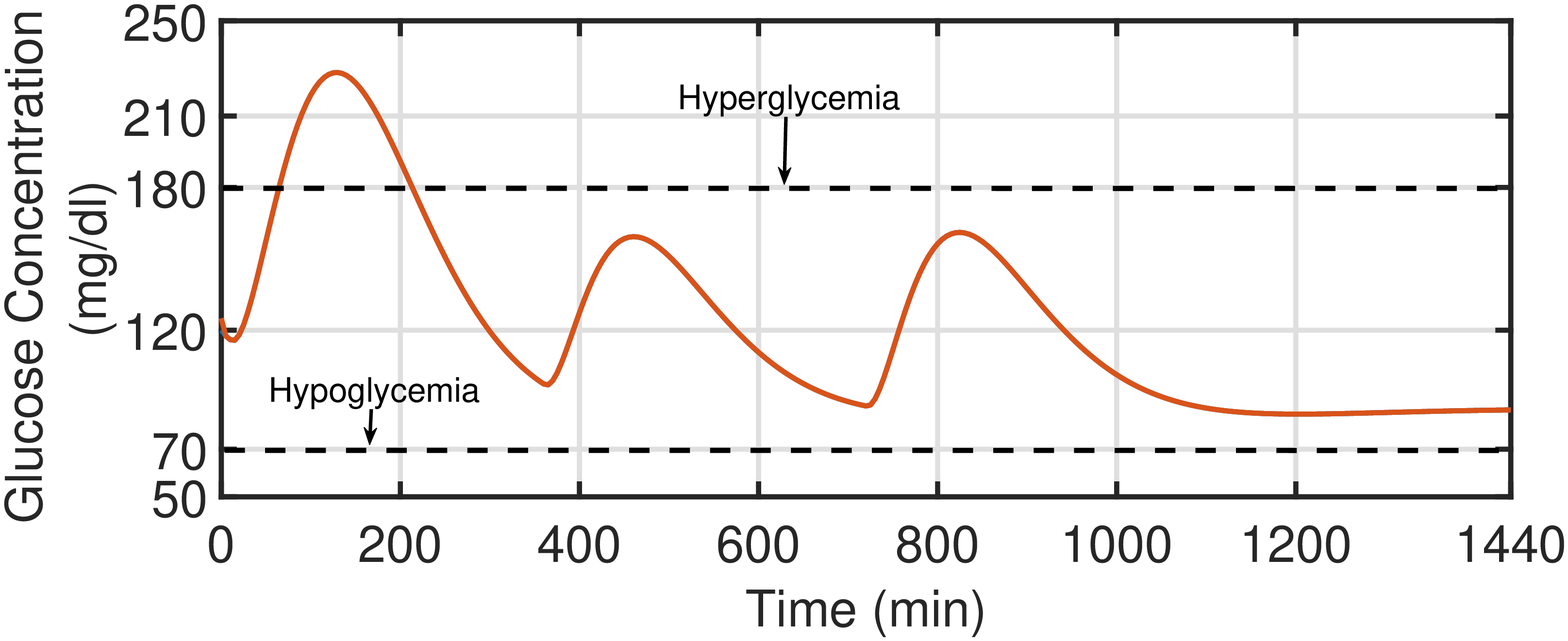}
}
\\
\subfloat[]{
% \hspace*{0.5cm}
  \includegraphics[width=70mm,height=3cm]{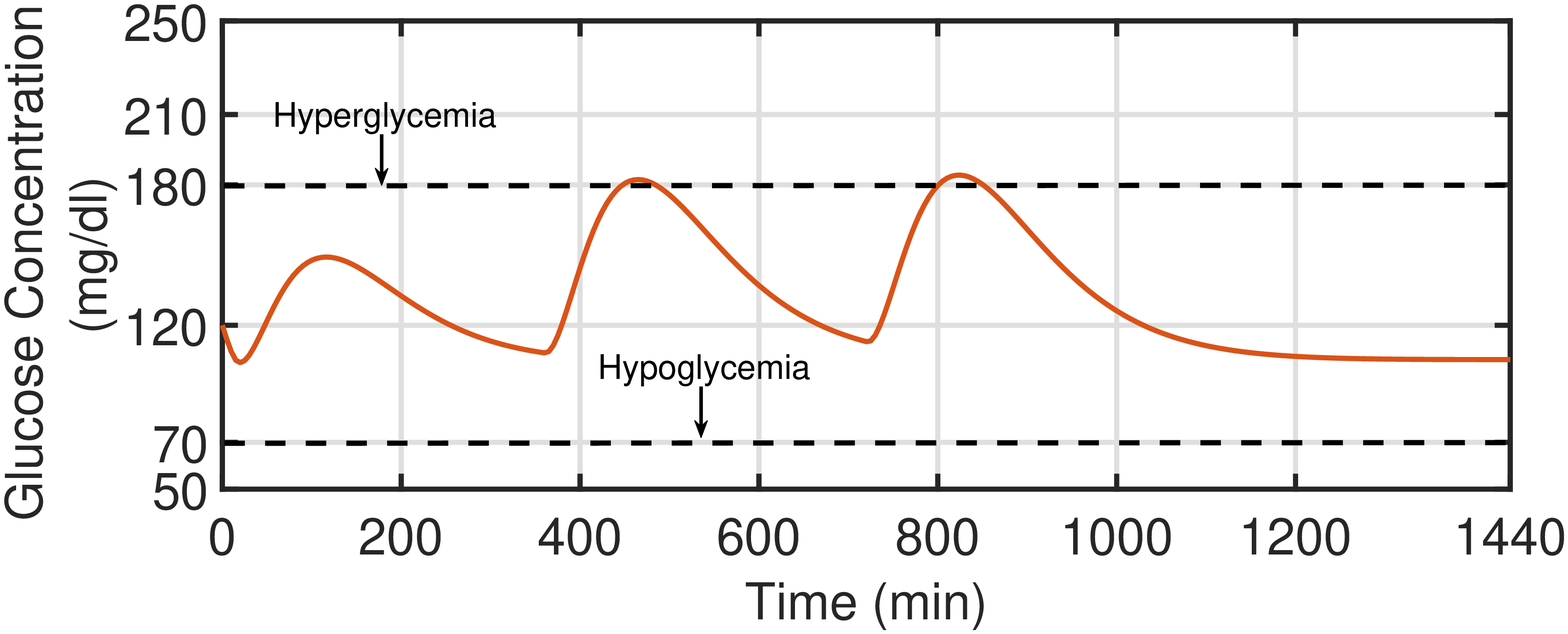}
}
\caption{Plasma glucose concentration trajectories of (a) subject 1,(b) subject 2 and (c) subject 3 under the proposed observer-based feedback control law.}
\label{g_plot}
\end{figure}

\begin{figure}[h!]
\centering
\subfloat[]{
% \hspace*{0.5cm}
  \includegraphics[width=70mm,height=3cm]{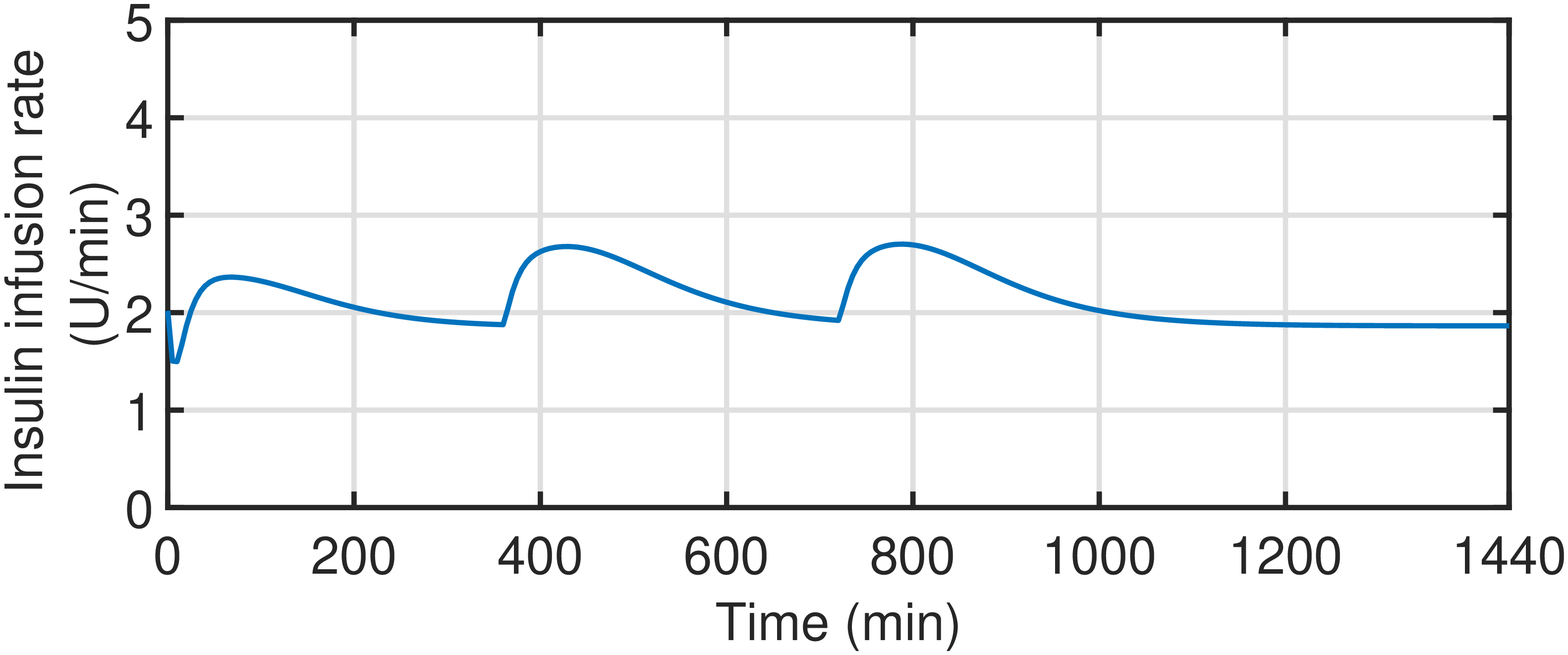}
}
\\
\subfloat[]{
% \hspace*{0.5cm}
  \includegraphics[width=70mm,height=3cm]{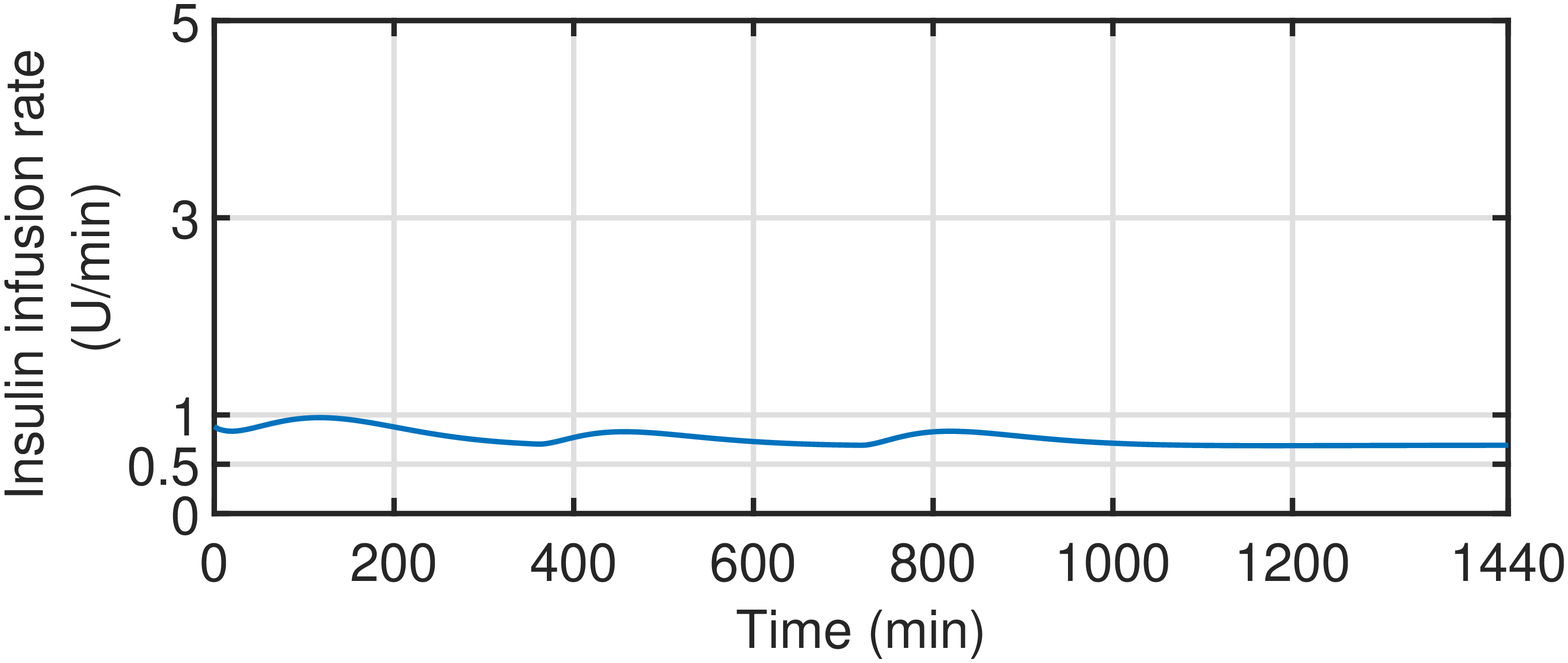}
}
\\
\subfloat[]{
% \hspace*{0.5cm}
  \includegraphics[width=70mm,height=3cm]{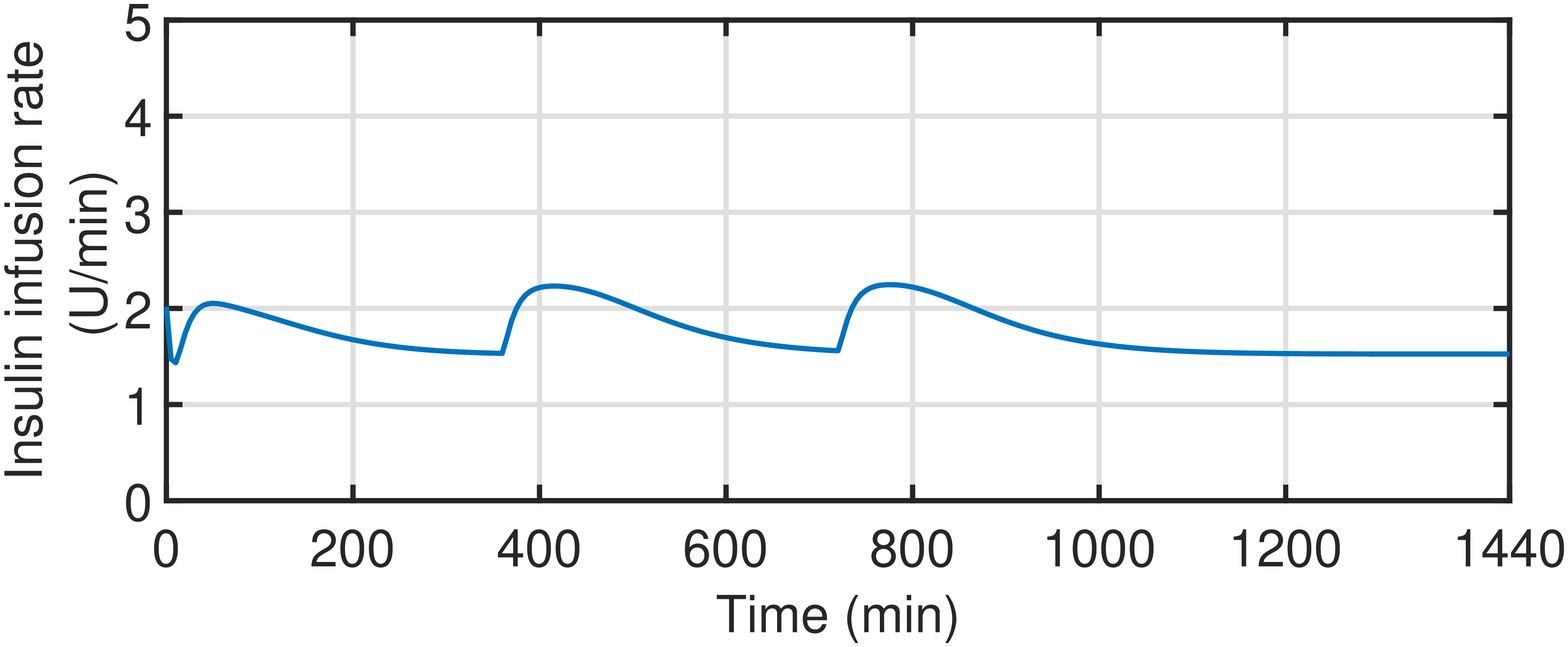}
}
\caption{Insulin infusion rate as suggested by the proposed observer-based feedback control law for (a) subject 1,(b) subject 2 and (c) subject 3.}
\label{u_plot}
\end{figure}

As illustrated in Fig. \ref{g_plot}, the glucose concentration is effectively regulated within the euglycemic range ($x_1\in [70-180]$ mg/dl) for all the three T1D subjects taken under consideration. In the case of Subject 3, there is a glucose excursion above 180 mg/dl in the postprandial period after the first meal. However, the glucose is brought below the hyperglycemic clamp ($x_1=180$ mg/dl) within two hours. During the whole simulation period, no instance of hypoglycemic ($X-1<70$ mg/dl) episodes are recorded for all the cases. It validates the ability of the proposed control algorithm in the nominal conditions. The corresponding control signals in terms of the exogenous subcutaneous insulin infusion rate are shown in Fig. \ref{u_plot}. It can be observed that an almost constant insulin delivery rate is maintained for the 24 h simulations. It is important to highlight that there is a transient increase in insulin delivery rate when there is a transient growth in glucose level after meal intakes. This feature of the control law is highly desirable for minimising the risks of postprandial hyperglycemia. The smooth profile of the exogenous insulin delivery rate makes the controller action less aggressive, which can help avoid late hypoglycemia induced by excessive insulin infusion if any.  

\begin{table*}[h!]
\centering
\caption{Simulation setting for Scenario 2A-2D for T1D Subjects 1, 3 and 5.}
\label{tab_protocol}
\begin{tabular}{|l|l|l|l|l|l|l|l|l|l|}
\hline
\multirow{2}{*}{Scenario} & \multicolumn{5}{|l|}{Parameters}                            & \multicolumn{4}{|l|}{Initial conditions}         \\
                          & $p_1$     & $p_2$     & $p_3$     & $p_4$     & $p_5$     & $x_{10}$     & $x_{20}$ & $x_{30}$   & $x_{40}$ \\ \hline
2A                        & Nominal   & Nominal   & $\pm30\%$ & Nominal   & Nominal   & 120          & 0.1      & 1          & 1        \\ \hline
2B                        & Nominal   & Nominal   & Nominal   & $\pm30\%$ & $\pm30\%$ & 120          & 0.1      & 1          & 1        \\ \hline
2C                        & $\pm30\%$ & $\pm30\%$ & $\pm30\%$ & $\pm30\%$ & $\pm30\%$ & 120          & 0.1      & 1          & 1        \\ \hline
2D                        & $\pm30\%$ & $\pm30\%$ & $\pm30\%$ & $\pm30\%$ & $\pm30\%$ & {[}80 140{]} & 0.01     & {[}0 10{]} & 1  \\ \hline
\end{tabular}
\end{table*}

\subsection{Scenario 2: A single day, 3 meal scenario with intra-patient variability}\label{scenario2}
The main focus of the current scenario is to investigate the effect of bounded parametric uncertainty on the closed-loop performance. Four different simulation settings are further considered that take explicit variability in insulin sensitivity (IS), insulin pharmacokinetics and pharmacodynamics, and the initial conditions of glucose and insulin in the body. The observer and the controller feedback parameters are computed only for the T1D patients considering the nominal parameter values. For testing the robustness of the proposed technique, these feedback parameters are kept constant for each subject. One hundred numerical simulations are carried out for each of the T1D subjects with random parametric perturbations. The different protocols of this scenario are provided in Table \ref{tab_protocol}. The details about the different simulation settings are presented below.

\subsubsection{Scenario 2A: Variability in insulin sensitivity (IS)}
IS can be critical physiological parameters responsible for variability in the glucose stabilization in T1DPs. The parameter, $p_3$ that is directly related the IS and is assumed to vary in a range of $\pm30\%$ of its nominal value. A total of 100 simulations are carried out with randomly varying $p_3$ in $\pm30\%$ uncertainty bound specified in Table \ref{tab_protocol}.\par

From Table \ref{tab_time}, it can be inferred that the proposed control law is highly efficient in dealing with the uncertainty in IS as noted in Table \ref{tab_protocol}. For all the cases, percentage of time spent in euglycemic range ($x_1\in [70-180]$ mg/dl) is more than 84\% and that in hyperglycemic range ($x_1>180$ mg/dl) is less than 16\%. No hypoglycemia ($x_1<70$ mg/dl) or severe hypoglycemia ($x_1<50$ mg/dl) observed from the simulation results with variable IS. 

\begin{table}[h!] \small%
\centering
\caption{Percentage of time spent by the the virtual patients in the euglycemic ($x_1\in [70-180]$ mg/dl), hyperglycemic ($x_1>180$ mg/dl) and hypoglycemic ($x_1\in [50-70]$ mg/dl).}
\label{tab_time}
\begin{tabular}{|p{1.2cm}|p{1.2cm}|p{1.2cm}|p{1.2cm}|p{1.2cm}| p{1.2cm}|}
\hline
Subject & Scenario No. & Time Spent in euglycemic range (\%)  & Time spent in hyperglycemic range (\%) &Time spent in hypoglycemic range (\%) \\
\hline
\cline{2-3}\cline{3-3} \cline{4-4}\cline{5-5}
Subject 1 & 2A & 84.08 &15.92 & 0  \\
\cline{2-3}\cline{3-3} \cline{4-4}\cline{5-5}
                 & 2B & 100 & 0 & 0 \\
\cline{2-3}\cline{3-3} \cline{4-4}\cline{5-5}
                 & 2C & 77.85 & 22.15 & 0 \\
\cline{2-3}\cline{3-3} \cline{4-4}\cline{5-5}
                 & 2D & 46.37 & 53.63 & 0 \\
\hline
\cline{2-3}\cline{3-3} \cline{4-4}\cline{5-5}
Subject 3 & 2A & 87.20 &12.80 & 0  \\
\cline{2-3}\cline{3-3} \cline{4-4}\cline{5-5}
                 & 2B & 69.55 & 30.45 & 0 \\
\cline{2-3}\cline{3-3} \cline{4-4}\cline{5-5}
                 & 2C & 78.89 & 21.11 & 0 \\
\cline{2-3}\cline{3-3} \cline{4-4}\cline{5-5}
                 & 2D & 59.86 & 40.14 & 0 \\
\hline
\cline{2-3}\cline{3-3} \cline{4-4}\cline{5-5}
Subject 5 & 2A & 92.73 &7.27 & 0  \\
\cline{2-3}\cline{3-3} \cline{4-4}\cline{5-5}
                 & 2B & 92.04 & 7.96 & 0 \\
\cline{2-3}\cline{3-3} \cline{4-4}\cline{5-5}
                 & 2C & 76.82 & 23.18 & 0 \\
\cline{2-3}\cline{3-3} \cline{4-4}\cline{5-5}
                 & 2D & 42.21 & 5.80 & 51.90 \\
\hline
\end{tabular}
\end{table}

\subsubsection{Scenario 2B: Variability in insulin pharmacokinetics and pharmacodynamics}
The variation in the insulin absorption dynamics due to the uncertainty in the parameters, $p_j,j=4,5$, is also significant. A variation of $\pm30\%$ about nominal values is assumed to carry out 100 numerical simulations for the T1D subjects. During each simulation, the parameters are randomly initialized from the uncertainty bound is given in Table \ref{tab_protocol}.\par

From Table \ref{tab_time}, it can be concluded that the designed controller possesses the right capability to handle the issues related to the uncertainty in the subcutaneous insulin absorption. The percentage of time spent in the euglycemic range is 84\% and 100\% for Subjects 1 and 5, respectively. While in the case of Subject 3, it is around 70\%. The time spent in hypoglycemia ($x_1<70$ mg/dl) is 0\% as provided in Table \ref{tab_time}. 

\subsubsection{Scenario 2C: Variability in all model parameters}
Next, an uncertain case is considered where all the system parameters, $p_i,i=1,...,5$, are varied in the range $\pm30\%$. The objective of this scenario is to determine the effectiveness of the proposed technique in ensuring an efficient glucose regulation.\par

In this case study, Subjects 1, 3, and 5 spent almost 76\% of the total time in the euglycemic range. The corresponding time spent above 180 mg/dl is below 22\% for all the cases. Still, it is a significant improvement of glycemic control under a wide range of variation in all the system parameters. Complete avoidance of hypoglycemic instances has been recorded in this case, as presented in Table \ref{tab_time}.

\begin{figure}[!h]
		\centering
		\includegraphics[width=1.0\linewidth]{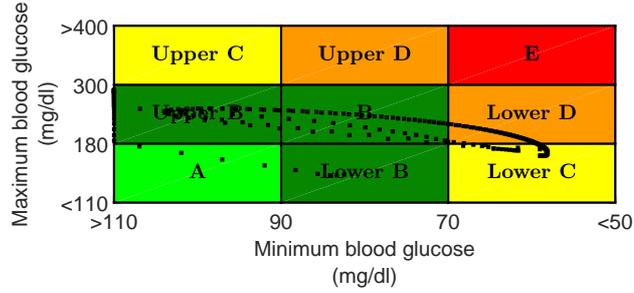}
		\caption{CVGA for parametric variability of $\pm30\%$}
		\label{cvga}\vspace{-0.4cm}
\end{figure}

\subsubsection{Scenario 2D: Variability in initial values of glucose and insulin}
To further complicate the situation, it is assumed that the initial glucose concentration varies within $x_1\in [80-140]$, and the initial insulin concentration vary within $x_3\in [0-10]$ mU/l. Like in the previous case, the parameters $p_i,i=1,...,5$ are assumed to vary in the range of $\pm30\%$. The details of the protocol can be referred to from Table \ref{tab_protocol}. A control variability grid analysis (CVGA) \cite{NATH20197} is performed to evaluate the efficacy of the proposed observer-based control technique as illustrated in Fig. \ref{cvga}.\par

As depicted in Fig. \ref{cvga}, all the black dots are confined to Grid A, B, Upper B, Lower B, Lower D, and Lower C in the CVGA plot. There are a total of 300 dots (100 dots per subject with random parameters and initial conditions), which correspond to 300 random simulations in total. Each of these dots is mapped to a particular co-ordinate in the CVGA plot (depending upon the maximum and minimum glucose level achieved during the simulation). Severe hypoglycemia, i.e., glucose level below 50 mg/dl, is successfully avoided in all of these random numerical simulations. Approximately, 75\% of the dots are confined to the safe grids (shown in green). Although 15\% (approx) are in Grid Lower D and Grid Lower C, they are still above 50 mg/dl safety constraint. It is worth mentioning at this stage that the closed-loop performance of the proposed algorithm is safe and effective. It is also important to note that the average time spent in the euglycemic range (70-180 mg/dl) by the three subjects is found to be approximately equal to 77\%, 73\% and 76\%, respectively, for scenarios 2A-2D. It validates the effectiveness of the proposed control algorithm in achieving improved glycemic control.

\subsubsection{Quantitative performance evaluation of the closed loop simulations}
The two most important features of the analysis of any glucose regulation algorithm are safety and efficacy. The former is investigated extensively in the previous simulation scenarios. The later is being investigated in the current section. Several important performance indices that are critical to evaluate the closed-loop glucose regulation include the Low Glucose Variability Index (LBGI), High Glucose Variability Index (HBGI), Coefficient of Variation (CoV), mean blood glucose and HbA1c \cite{kovatchev2005quantifying}. The LBGI and HBGI for the Scenarios 2A-2D are presented as bar plots, as illustrated in Fig. \ref{lbgi}. The rest of the indices are computed, and the corresponding results are summarised in Table \ref{tab_stat}. 

\begin{table}[h!] \small%
\centering
\caption{Statistical analysis of the closed loop results obtained during Scenario 2A-2D.}
\label{tab_stat}
\begin{tabular}{|p{1.2cm}|p{1.2cm}|p{1.2cm}|p{1.2cm}|p{1.2cm}| p{1.2cm}|}
\hline
Subject & Scenario No. & Mean glucose (mg/dl)  & CoV (\%) & HbA1c (\%) \\
\hline
\cline{2-3}\cline{3-3} \cline{4-4}\cline{5-5}
Subject 1 & 2A & 148.69 & 0.0222 & 6.8081  \\
\cline{2-3}\cline{3-3} \cline{4-4}\cline{5-5}
                 & 2B & 136.48 & 0.0229 & 6.3827 \\
\cline{2-3}\cline{3-3} \cline{4-4}\cline{5-5}
                 & 2C & 152.31 & 0.1325 & 6.9342 \\
\cline{2-3}\cline{3-3} \cline{4-4}\cline{5-5}
                 & 2D & 148.99 & 0.1195 & 6.6791 \\
\hline
\cline{2-3}\cline{3-3} \cline{4-4}\cline{5-5}
Subject 3 & 2A & 144.98 & 0.1087 & 6.6791  \\
\cline{2-3}\cline{3-3} \cline{4-4}\cline{5-5}
                 & 2B & 145.56 & 0.9910 & 6.6991 \\
\cline{2-3}\cline{3-3} \cline{4-4}\cline{5-5}
                 & 2C & 139.96 & 0.2022 & 6.5039 \\
\cline{2-3}\cline{3-3} \cline{4-4}\cline{5-5}
                 & 2D & 140.69 & 0.1945 & 6.5294 \\
\hline
\cline{2-3}\cline{3-3} \cline{4-4}\cline{5-5}
Subject 5 & 2A & 114.23 & 0.1113 & 5.6076  \\
\cline{2-3}\cline{3-3} \cline{4-4}\cline{5-5}
                 & 2B & 111.26 & 0.0811 & 5.5040 \\
\cline{2-3}\cline{3-3} \cline{4-4}\cline{5-5}
                 & 2C & 112.75 & 0.1884 & 5.5559 \\
\cline{2-3}\cline{3-3} \cline{4-4}\cline{5-5}
                 & 2D & 110.90 & 0.1891 & 5.4916 \\
\hline
\end{tabular}
\end{table}

\begin{figure}[h!]
\centering
\subfloat[]{
% \hspace*{0.5cm}
  \includegraphics[width=70mm,height=3cm]{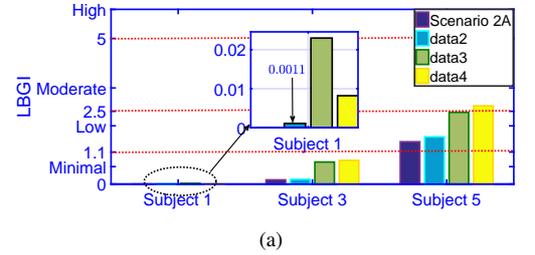}
}
\\
\subfloat[]{
% \hspace*{0.5cm}
  \includegraphics[width=70mm,height=3cm]{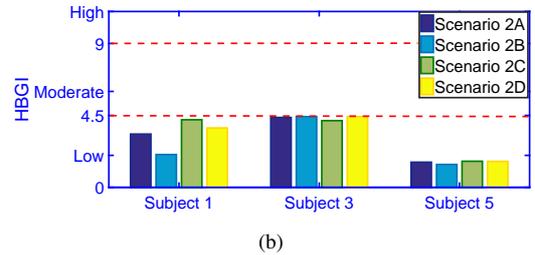}
}
\caption{LBGI and HBGI for the (a) subject 1,(b) subject 2 and (c) subject 3 for Scenarios 2A-2D.}
\label{lbgi}
\end{figure}

The statistical results of the closed-loop simulations of Scenario 2A-2D are thoroughly summarized in Table \ref{tab_stat}. The mean glucose level of 136.48-152.31 mg/dl, 139.96-145.56 mg/dl, and 110.90-114.23 mg/dl are maintained for all the virtual T1D subjects under large perturbations. The quality of glycemic regulation obtained by the proposed observer-based controller is reflected with around  HbA1c of 6.5\%, which is one of the main aims of any standard diabetes treatment \cite{schnell2017impact}. The above-mentioned facts can also be verified from the attainment of a low value of Coefficient of Variability (CoV). In all the cases, CoV is below 2\%, which is an indicator of low glucose variability under the feedback action. The LBGI and HBGI predict the risk of hypoglycemia and hyperglycemia, respectively. The risk analysis that is directly related to the reliability of any glucose controller depends on these two factors. The LBGI and HBGI for Subjects 1, 3, and 5 are presented in a bar plot in Fig. \ref{lbgi}. It can be easily observed from Fig. \ref{lbgi}(a) that the LBGI for Subject 1 and 3 are in the minimal LBGI region (LBGI$<1.1$), which is significant in avoiding the risks of hypoglycemia. For T1D subject 5, LBGI is low. The corresponding HBGI are also in the Low LBGI region, as illustrated in Fig. \ref{lbgi}(b). Hence, the results validate that the risks related to the postprandial hyperglycemia are low under the designed control action.

\section{Conclusion}
The concept of contraction analysis is utilized in the present work for designing an observer and a controller to achieve a tight glycemic control in the presence of intra-patient variability. The proposed method is shown to be effective in handling intra-patient variability resulting from different sources of uncertainty.  Extensive numerical simulations are carried out to evaluate the performance of the proposed technique for realistic scenarios. The postprandial hyperglycemic events are significantly minimized, and hypoglycemia is avoided. The efficacy of the control algorithm is evaluated through statistical analysis.  The simple structure of the observer and control law makes it a desirable candidate for Artificial Pancreas. The control scheme can be extended to more complicated models like the UVa Padova model, Hovorka model, etc., that consider a detailed dynamics where the effect of glucagon and free fatty acids. The design philosophy can be extended for nonlinear systems with a quantized and sampled output, which represents the actual glucose measurements done by the CGM devices in practice.

\section*{Conflicts of interest}
The authors declare no conflicts of interest.
%%%%%%%%%%%%%%%%%%%%%%%%%%%%%%%%%%%%%%%%%%%%%%%%%%%%%%%%%%%%%%%%%%%%%%%%%%%%%%%%%%%%%%%%%%%%%%%%%%%%%%%%%%%%%%%%%%%%%%%%%%%%%%%%%%%%%%%%%%%%%%%%%%%%%%%%%%%%%%%%%%%%
\bibliographystyle{elsarticle-num}
\bibliography{reference}

\begin{thebibliography}{10}
\expandafter\ifx\csname url\endcsname\relax
  \def\url#1{\texttt{#1}}\fi
\expandafter\ifx\csname urlprefix\endcsname\relax\def\urlprefix{URL }\fi
\expandafter\ifx\csname href\endcsname\relax
  \def\href#1#2{#2} \def\path#1{#1}\fi

\bibitem{cinar2018artificial}
A.~{Cinar}, Artificial pancreas systems: An introduction to the special issue,
  IEEE Control Systems Magazine 38~(1) (2018) 26--29.

\bibitem{MAJDPOUR2019S8}
D.~Majdpour, M.~Tsoukas, J.-F. Yale, A.~Haidar, 19 - a fully-automated
  artificial pancreas to alleviate the burden of carbohydrate counting in type
  1 diabetes, Canadian Journal of Diabetes 43~(7, Supplement) (2019) S8, 2019
  Abstract Issue.
\newblock \href {http://dx.doi.org/https://doi.org/10.1016/j.jcjd.2019.07.026}
  {\path{doi:https://doi.org/10.1016/j.jcjd.2019.07.026}}.

\bibitem{BJORNSTAD2018809}
P.~Bjornstad, K.~C. Donaghue, D.~M. Maahs, Macrovascular disease and risk
  factors in youth with type 1 diabetes: time to be more attentive to
  treatment?, The Lancet Diabetes \& Endocrinology 6~(10) (2018) 809 -- 820.
\newblock \href
  {http://dx.doi.org/https://doi.org/10.1016/S2213-8587(18)30035-4}
  {\path{doi:https://doi.org/10.1016/S2213-8587(18)30035-4}}.

\bibitem{jdrf}
Complications of type 1 diabetes (Jul 2019).

\bibitem{bondia2018insulin}
J.~{Bondia}, S.~{Romero-Vivo}, B.~{Ricarte}, J.~L. {Diez}, Insulin estimation
  and prediction: A review of the estimation and prediction of subcutaneous
  insulin pharmacokinetics in closed-loop glucose control, IEEE Control Systems
  Magazine 38~(1) (2018) 47--66.

\bibitem{insulin}
Y.~{Ruan}, M.~E. {Wilinska}, H.~{Thabit}, R.~{Hovorka}, Modeling day-to-day
  variability of glucose–insulin regulation over 12-week home use of
  closed-loop insulin delivery, IEEE Transactions on Biomedical Engineering
  64~(6) (2017) 1412--1419.

\bibitem{sensor}
L.~M. {Huyett}, E.~{Dassau}, H.~C. {Zisser}, F.~J. {Doyle}, Glucose sensor
  dynamics and the artificial pancreas: The impact of lag on sensor measurement
  and controller performance, IEEE Control Systems Magazine 38~(1) (2018)
  30--46.

\bibitem{hypo}
B.~W. {Bequette}, F.~{Cameron}, B.~A. {Buckingham}, D.~M. {Maahs}, J.~{Lum},
  Overnight hypoglycemia and hyperglycemia mitigation for individuals with type
  1 diabetes: How risks can be reduced, IEEE Control Systems Magazine 38~(1)
  (2018) 125--134.

\bibitem{MARCHETTI2008149}
G.~Marchetti, M.~Barolo, L.~Jovanovič, H.~Zisser, D.~E. Seborg, A
  feedforward–feedback glucose control strategy for type 1 diabetes mellitus,
  Journal of Process Control 18~(2) (2008) 149 -- 162.
\newblock \href
  {http://dx.doi.org/https://doi.org/10.1016/j.jprocont.2007.07.008}
  {\path{doi:https://doi.org/10.1016/j.jprocont.2007.07.008}}.

\bibitem{FUSHIMI20181}
Artificial pancreas clinical trials: Moving towards closed-loop control using
  insulin-on-board constraints, Biomedical Signal Processing and Control 45
  (2018) 1 -- 9.
\newblock \href {http://dx.doi.org/https://doi.org/10.1016/j.bspc.2018.05.009}
  {\path{doi:https://doi.org/10.1016/j.bspc.2018.05.009}}.

\bibitem{guarcost}
A.~Nath, R.~Dey, Robust guaranteed-cost output feedback control of blood
  glucose in type 1 diabetes patient with intrapatient variability, Optimal
  Control Applications and Methods n/a~(n/a).
\newblock \href {http://dx.doi.org/10.1002/oca.2607}
  {\path{doi:10.1002/oca.2607}}.

\bibitem{ijs}
A.~Nath, D.~Deb, R.~Dey, Robust observer-based adaptive control of blood
  glucose in diabetic patients, International Journal of Control 0~(0) (2020)
  1--14.
\newblock \href {http://dx.doi.org/10.1080/00207179.2020.1750705}
  {\path{doi:10.1080/00207179.2020.1750705}}.

\bibitem{aem}
A.~{Nath}, R.~{Dey}, Robust observer based control for plasma glucose
  regulation in type 1 diabetes patient using attractive ellipsoid method, IET
  Systems Biology 13~(2) (2019) 84--91.

\bibitem{hariri2011identification}
A.~M. Hariri, Identification, state estimation, and adaptive control of type i
  diabetic patients.

\bibitem{NATH20197}
A.~Nath, R.~Dey, C.~Aguilar-Avelar, Observer based nonlinear control design for
  glucose regulation in type 1 diabetic patients: An lmi approach, Biomedical
  Signal Processing and Control 47 (2019) 7 -- 15.
\newblock \href {http://dx.doi.org/https://doi.org/10.1016/j.bspc.2018.07.020}
  {\path{doi:https://doi.org/10.1016/j.bspc.2018.07.020}}.

\bibitem{7413702}
P.~{Biswas}, S.~{Bhaumik}, I.~{Patiyat}, Estimation of glucose and insulin
  concentration using nonlinear gaussian filters, in: 2016 IEEE First
  International Conference on Control, Measurement and Instrumentation (CMI),
  2016, pp. 16--20.

\bibitem{kf}
S.~Girardot, F.~Mousin, J.~Vezinet, P.~Jacquemier, S.~Hardy, J.-P. Riveline,
  Kalman filter-based novel methodology to assess insulin pump accuracy,
  Diabetes Technology \& Therapeutics 21~(10) (2019) 557--565, pMID: 31335164.
\newblock \href {http://dx.doi.org/10.1089/dia.2019.0147}
  {\path{doi:10.1089/dia.2019.0147}}.

\bibitem{ekf}
D.~de~Pereda, S.~Romero-Vivo, B.~Ricarte, P.~Rossetti, F.~J. Ampudia-Blasco,
  J.~Bondia, Real-time estimation of plasma insulin concentration from
  continuous glucose monitor measurements, Computer Methods in Biomechanics and
  Biomedical Engineering 19~(9) (2016) 934--942, pMID: 26343364.
\newblock \href {http://dx.doi.org/10.1080/10255842.2015.1077234}
  {\path{doi:10.1080/10255842.2015.1077234}}.

\bibitem{eberle2011unscented}
C.~Eberle, C.~Ament, The unscented kalman filter estimates the plasma insulin
  from glucose measurement, Biosystems 103~(1) (2011) 67--72.

\bibitem{ukfmeal}
C.~M. {Ramkissoon}, A.~{Bertachi}, A.~{Beneyto}, J.~{Bondia}, J.~{Vehi},
  Detection and control of unannounced exercise in the artificial pancreas
  without additional physiological signals, IEEE Journal of Biomedical and
  Health Informatics 24~(1) (2020) 259--267.

\bibitem{9067132}
B.~{Singh}, S.~{Kamal}, D.~{Ghosh}, S.~{Ghosh}, A.~{Ferrara}, A.~K. {Pal},
  J.~K. {Goyal}, Controller and observer design using vector framework with
  simplified contraction analysis, in: 2020 IEEE International Conference on
  Industrial Technology (ICIT), 2020, pp. 29--34.

\bibitem{NATH2018289}
A.~Nath, S.~Biradar, A.~Balan, R.~Dey, R.~Padhi, Physiological models and
  control for type 1 diabetes mellitus: A brief review, IFAC-PapersOnLine
  51~(1) (2018) 289 -- 294, 5th IFAC Conference on Advances in Control and
  Optimization of Dynamical Systems ACODS 2018.
\newblock \href
  {http://dx.doi.org/https://doi.org/10.1016/j.ifacol.2018.05.077}
  {\path{doi:https://doi.org/10.1016/j.ifacol.2018.05.077}}.

\bibitem{blauw2016review}
H.~Blauw, P.~Keith-Hynes, R.~Koops, J.~H. DeVries, A review of safety and
  design requirements of the artificial pancreas, Annals of biomedical
  engineering 44~(11) (2016) 3158--3172.

\bibitem{review}
R.~A. Lal, L.~Ekhlaspour, K.~Hood, B.~Buckingham, {Realizing a Closed-Loop
  (Artificial Pancreas) System for the Treatment of Type 1 Diabetes}, Endocrine
  Reviews 40~(6) (2019) 1521--1546.
\newblock \href {http://dx.doi.org/10.1210/er.2018-00174}
  {\path{doi:10.1210/er.2018-00174}}.

\bibitem{RuizVelzquez2004BloodGC}
E.~Ruiz-Vel{\'a}zquez, R.~Femat, D.~U. Campos-Delgado, Blood glucose control
  for type i diabetes mellitus: A robust tracking h$\infty$ problem, 2004.

\bibitem{mandal2014lmi}
S.~Mandal, A.~Bhattacharjee, A.~Sutradhar, Lmi based robust blood glucose
  regulation in type-1 diabetes patient with daily multi-meal ingestion,
  Journal of The Institution of Engineers (India): Series B 95~(2) (2014)
  121--128.

\bibitem{6851161}
P.~{Colmegna}, R.~S. {Sánchez Peña}, R.~{Gondhalekar}, E.~{Dassau}, F.~J.
  {Doyle III}, Reducing risks in type 1 diabetes using ${\cal h}_\infty$
  control, IEEE Transactions on Biomedical Engineering 61~(12) (2014)
  2939--2947.

\bibitem{AHMAD2017200}
S.~Ahmad, NasimUllah, N.~Ahmed, M.~Ilyas, W.~Khan, Super twisting sliding mode
  control algorithm for developing artificial pancreas in type 1 diabetes
  patients, Biomedical Signal Processing and Control 38 (2017) 200 -- 211.
\newblock \href {http://dx.doi.org/https://doi.org/10.1016/j.bspc.2017.06.009}
  {\path{doi:https://doi.org/10.1016/j.bspc.2017.06.009}}.

\bibitem{GALLARDOHERNANDEZ2013747}
A.~G. {Gallardo Hernández}, L.~Fridman, A.~Levant, Y.~Shtessel, R.~Leder,
  C.~{Revilla Monsalve}, S.~{Islas Andrade}, High-order sliding-mode control
  for blood glucose: Practical relative degree approach, Control Engineering
  Practice 21~(5) (2013) 747 -- 758.
\newblock \href
  {http://dx.doi.org/https://doi.org/10.1016/j.conengprac.2012.11.015}
  {\path{doi:https://doi.org/10.1016/j.conengprac.2012.11.015}}.

\bibitem{7902222}
X.~{Yu}, O.~{Kaynak}, Sliding mode control made smarter: A computational
  intelligence perspective, IEEE Systems, Man, and Cybernetics Magazine 3~(2)
  (2017) 31--34.

\bibitem{6463437}
A.~{Revert}, F.~{Garelli}, J.~{Picó}, H.~{De Battista}, P.~{Rossetti},
  J.~{Vehi}, J.~{Bondia}, Safety auxiliary feedback element for the artificial
  pancreas in type 1 diabetes, IEEE Transactions on Biomedical Engineering
  60~(8) (2013) 2113--2122.

\bibitem{8025565}
C.~M. {Ramkissoon}, B.~{Aufderheide}, B.~W. {Bequette}, J.~{Vehí}, A review of
  safety and hazards associated with the artificial pancreas, IEEE Reviews in
  Biomedical Engineering 10 (2017) 44--62.

\bibitem{7870660}
S.~{Samadi}, K.~{Turksoy}, I.~{Hajizadeh}, J.~{Feng}, M.~{Sevil}, A.~{Cinar},
  Meal detection and carbohydrate estimation using continuous glucose sensor
  data, IEEE Journal of Biomedical and Health Informatics 21~(3) (2017)
  619--627.

\bibitem{8263478}
K.~{Turksoy}, E.~{Littlejohn}, A.~{Cinar}, Multimodule, multivariable
  artificial pancreas for patients with type 1 diabetes: Regulating glucose
  concentration under challenging conditions, IEEE Control Systems Magazine
  38~(1) (2018) 105--124.

\bibitem{contraction}
J.~{Jouffroy}, Some ancestors of contraction analysis, in: Proceedings of the
  44th IEEE Conference on Decision and Control, 2005, pp. 5450--5455.

\bibitem{LOHMILLER1998683}
W.~Lohmiller, J.-J.~E. Slotine, On contraction analysis for non-linear systems,
  Automatica 34~(6) (1998) 683 -- 696.
\newblock \href
  {http://dx.doi.org/https://doi.org/10.1016/S0005-1098(98)00019-3}
  {\path{doi:https://doi.org/10.1016/S0005-1098(98)00019-3}}.

\bibitem{sharma2008design}
B.~B. Sharma, I.~N. Kar, Design of asymptotically convergent frequency
  estimator using contraction theory, IEEE Transactions on Automatic Control
  53~(8) (2008) 1932--1937.

\bibitem{rayguru2019high}
M.~M. Rayguru, I.~N. Kar, High gain observer-based saturated controller for
  feedback linearizable system, IEEE Transactions on Circuits and Systems II:
  Express Briefs 67~(1) (2019) 122--126.

\bibitem{AYLWARD20082163}
E.~M. Aylward, P.~A. Parrilo, J.-J.~E. Slotine, Stability and robustness
  analysis of nonlinear systems via contraction metrics and sos programming,
  Automatica 44~(8) (2008) 2163 -- 2170.
\newblock \href
  {http://dx.doi.org/https://doi.org/10.1016/j.automatica.2007.12.012}
  {\path{doi:https://doi.org/10.1016/j.automatica.2007.12.012}}.

\bibitem{ivp}
S.~S. Kanderian, S.~Weinzimer, G.~Voskanyan, G.~M. Steil, Identification of
  intraday metabolic profiles during closed-loop glucose control in individuals
  with type 1 diabetes, Journal of Diabetes Science and Technology 3~(5) (2009)
  1047--1057, pMID: 20144418.
\newblock \href {http://dx.doi.org/10.1177/193229680900300508}
  {\path{doi:10.1177/193229680900300508}}.

\bibitem{TURKSOY2017159}
K.~Turksoy, I.~Hajizadeh, S.~Samadi, J.~Feng, M.~Sevil, M.~Park, L.~Quinn,
  E.~Littlejohn, A.~Cinar, Real-time insulin bolusing for unannounced meals
  with artificial pancreas, Control Engineering Practice 59 (2017) 159 -- 164.
\newblock \href
  {http://dx.doi.org/https://doi.org/10.1016/j.conengprac.2016.08.001}
  {\path{doi:https://doi.org/10.1016/j.conengprac.2016.08.001}}.

\bibitem{SALAMIRA201968}
I.~Sala-Mira, J.-L. Díez, B.~Ricarte, J.~Bondia, Sliding-mode disturbance
  observers for an artificial pancreas without meal announcement, Journal of
  Process Control 78 (2019) 68 -- 77.
\newblock \href
  {http://dx.doi.org/https://doi.org/10.1016/j.jprocont.2019.03.008}
  {\path{doi:https://doi.org/10.1016/j.jprocont.2019.03.008}}.

\bibitem{lio2019automated}
P.~Li{\`o}, P.~Zuliani, Automated Reasoning for Systems Biology and Medicine,
  Vol.~30, Springer, 2019.

\bibitem{RESALAT20177756}
N.~Resalat, J.~E. Youssef, R.~Reddy, P.~G. Jacobs, Evaluation of model
  complexity in model predictive control within an exercise-enabled artificial
  pancreas, IFAC-PapersOnLine 50~(1) (2017) 7756 -- 7761, 20th IFAC World
  Congress.
\newblock \href
  {http://dx.doi.org/https://doi.org/10.1016/j.ifacol.2017.08.2270}
  {\path{doi:https://doi.org/10.1016/j.ifacol.2017.08.2270}}.

\bibitem{NATH202094}
A.~Nath, D.~Deb, R.~Dey, An augmented subcutaneous type 1 diabetic patient
  modelling and design of adaptive glucose control, Journal of Process Control
  86 (2020) 94 -- 105.
\newblock \href
  {http://dx.doi.org/https://doi.org/10.1016/j.jprocont.2019.08.010}
  {\path{doi:https://doi.org/10.1016/j.jprocont.2019.08.010}}.

\bibitem{model}
J.~Garcia-Tirado, C.~Zuluaga-Bedoya, M.~D. Breton, Identifiability analysis of
  three control-oriented models for use in artificial pancreas systems, Journal
  of Diabetes Science and Technology 12~(5) (2018) 937--952, pMID: 30095007.
\newblock \href {http://dx.doi.org/10.1177/1932296818788873}
  {\path{doi:10.1177/1932296818788873}}.

\bibitem{ricardo}
R.~{Sanz}, P.~{García}, J.~{Díez}, J.~{Bondia}, Artificial pancreas system
  with unannounced meals based on a disturbance observer and feedforward
  compensation, IEEE Transactions on Control Systems Technology (2020) 1--7.

\bibitem{doi:10.3109/14639239709089833}
D.~R.~L. Worthington, Minimal model of food absorption in the gut, Medical
  Informatics 22~(1) (1997) 35--45.
\newblock \href {http://dx.doi.org/10.3109/14639239709089833}
  {\path{doi:10.3109/14639239709089833}}.

\bibitem{contractionmainpaper}
W.~Lohmiller, J.-J.~E. Slotine, On contraction analysis for non-linear systems,
  Automatica 34~(6) (1998) 683--696.

\bibitem{vidyasagar1978matrix}
M.~Vidyasagar, On matrix measures and convex liapunov functions, Journal of
  Mathematical Analysis and Applications 62~(1) (1978) 90--103.

\bibitem{sontagopenproblems}
Z.~Aminzare, E.~D. Sontagy, Contraction methods for nonlinear systems: A brief
  introduction and some open problems, in: 53rd IEEE Conference on Decision and
  Control, IEEE, 2014, pp. 3835--3847.

\bibitem{observer}
J.~Jouffroy, T.~I. Fossen, A tutorial on incremental stability analysis using
  contraction theory.

\bibitem{partialcontraction}
W.~Wang, J.-J.~E. Slotine, On partial contraction analysis for coupled
  nonlinear oscillators, Biological cybernetics 92~(1) (2005) 38--53.

\bibitem{kovatchev2005quantifying}
B.~P. Kovatchev, W.~L. Clarke, M.~Breton, K.~Brayman, A.~McCall, Quantifying
  temporal glucose variability in diabetes via continuous glucose monitoring:
  mathematical methods and clinical application, Diabetes technology \&
  therapeutics 7~(6) (2005) 849--862.

\bibitem{schnell2017impact}
O.~Schnell, J.~B. Crocker, J.~Weng, Impact of hba1c testing at point of care on
  diabetes management, Journal of Diabetes Science and Technology 11~(3) (2017)
  611--617.

\end{thebibliography}

%\appendix
%\input{appendix}
\end{document}